\documentclass[acmsmall,screen,noacm]{acmart}
\AtBeginDocument{%
  }

\usepackage{subcaption}

\usepackage{wrapfig}

\usepackage{textcomp} 
\usepackage{stmaryrd} 
\SetSymbolFont{stmry}{bold}{U}{stmry}{m}{n}

\usepackage{mathtools}

\usepackage[capitalize,nameinlink]{cleveref} 
\crefformat{definition}{Definition~#2#1#3}
\crefformat{equation}{Equation~(#2#1#3)}
\crefformat{lemma}{Lemma~#2#1#3}
\crefformat{theorem}{Theorem~#2#1#3}
\crefformat{section}{Section~#2#1#3}

\usepackage{thmtools,thm-restate}

\usepackage{url}

\usepackage{mathpartir}

\usepackage{booktabs}
\usepackage{multirow}

\usepackage{xcolor}
\definecolor{ltblue}{rgb}{0,0.4,0.4}
\definecolor{dkblue}{rgb}{0,0.1,0.6}
\definecolor{dkgreen}{rgb}{0,0.35,0}
\definecolor{dkviolet}{rgb}{0.3,0,0.5}
\definecolor{dkred}{rgb}{0.5,0,0}

\usepackage{listings}
\usepackage[cache=false]{minted}

\usepackage{tikz} 
\usetikzlibrary{cd}
\usetikzlibrary{positioning}
\usetikzlibrary{shapes.multipart}
\usetikzlibrary{%
  arrows,%
  shapes.misc,
  shapes.arrows,%
  shapes.callouts,
  chains,%
  matrix,%
  positioning,
  scopes,%
  decorations.pathmorphing,
  decorations.text,
  shadows,%
  quotes%
}
\usetikzlibrary{quantikz2}

\usepackage{xparse}
\NewDocumentCommand{\optionalParens}{s m m}{
    \IfBooleanTF{#2}{\left(#3\right)}{\IfBooleanTF{#1}{~#3}{#3}}
}
\NewDocumentCommand{\apply}{m s O{} m}{
    #1#3 \optionalParens*{#2}{#4}
}

\NewDocumentCommand{\applytwo}{m O{} s m s m}{ 
   #1#2~\optionalParens{#3}{#4}~\optionalParens{#5}{#6}
}

\newcommand{\letin}[3]{\text{let}~#1 = #2; ~#3}

\newcommand{\fix}[3]{\text{fix}~#1.~#2.~#3}

\usepackage{xspace}

\usepackage{cmll} 
\newcommand{\lolli}{\multimap}

\usepackage{braket}
\usepackage[compat=0.6]{yquant}

\usepackage{enumerate}

\newcommand{\MEAS}[1]{\texttt{meas}(#1)}
\newcommand{\IF}[3]{\textrm{if}~#1~\textrm{then}~#2~\textrm{else}~#3}

\newcommand{\cfg}[3]{\langle #1; #2; #3 \rangle}

\newcommand{\Actors}[1]{\text{Actors}(#1)}
\newcommand{\interp}[1]{\llbracket #1 \rrbracket}

\newcommand{\sendin}[3]{\text{send}~#1~\text{to}~#2; ~#3}
\newcommand{\receivefrom}[2]{\letin{#2}{\text{receive from}~#1}}
\newcommand{\establishwith}[2]{\letin{#1}{\text{entangle with}~#2}}
\newcommand{\bit}{\texttt{bit}}
\newcommand{\qubit}{\texttt{qubit}}
\newcommand{\NEW}[1]{\texttt{new}(#1)}
\newcommand{\domain}[1]{\text{dom}(#1)}
\newcommand{\REF}[1]{#1}
\newcommand{\ol}{\overline}
\newcommand{\dom}[1]{\domain{#1}}

\newcommand{\defeq}{\triangleq}

\usetikzlibrary{calc}
\newcommand{\cnotinline}{%
\tikz[baseline=-0.55ex, x=1ex, y=1ex, line width=0.45pt]{%
\draw (0,0.75) -- (0,-0.75); 
\fill (0,0.75) circle (0.18); 
\draw (0,-0.75) circle (0.38); 
\draw (-0.38,-0.75) -- (0.38,-0.75); 
\draw (0,-1.13) -- (0,-0.37); 
}%
}

\acmDOI{XXXXXXX.XXXXXXX}

\setcopyright{none}

\begin{document}

\title{Qoreo: Choreographic Programming for Quantum Distributed Systems}

\author{Jennifer Paykin}
\orcid{0009-0008-9502-3219}
\affiliation{
  \institution{University of Vermont}
  \city{Burlington}
  \state{Vermont}
  \country{USA}
}
\email{jennifer.paykin@uvm.edu}

\author{Steven Baldasty}
\orcid{0009-0001-9617-8203}
\affiliation{
  \institution{University of Vermont}
  \city{Burlington}
  \state{Vermont}
  \country{USA}
}
\email{steven.baldasty@uvm.edu}

\author{Joseph P. Near}
\orcid{0000-0002-3203-3742}
\affiliation{
  \institution{University of Vermont}
  \city{Burlington}
  \state{Vermont}
  \country{USA}
}
\email{jnear@uvm.edu}

\author{Christian Skalka}
\orcid{0000-0002-0402-809X}
\affiliation{
  \institution{University of Vermont}
  \city{Burlington}
  \state{Vermont}
  \country{USA}
}
\email{ceskalka@uvm.edu}


\begin{abstract}
  Programming distributed quantum systems requires multiple actors to
  coordinate precise sequences of quantum operations, classical
  communication, and entanglement generation.
  Writing such protocols
  directly as distributed processes is tedious and error-prone, and
  subtle mismatches can cause deadlock or silently incorrect quantum
  states.
  We present \emph{Qoreo}, a choreographic programming language for
  quantum distributed systems in which an entire protocol is expressed
  as single, global program (a choreography) rather than as a collection of independent
  actor processes.
  Qoreo includes a local quantum language with linear types that
  enforce the no-cloning principle; a choreographic language that
  combines local quantum computation with inter-actor classical and
  quantum communication; and a process language for individual network
  nodes.
  We prove type safety for choreographies, guaranteeing that
  well-typed programs implement well-defined quantum operations, and we
  define endpoint projection~(EPP), which automatically derives a
  network of independent processes from any choreography.
  We prove EPP sound and complete with respect to the choreographic
  semantics; as a corollary, every well-typed choreography projects to
  a deadlock-free process network.
  The metatheory of Qoreo is fully mechanized in Rocq, and we provide
  an extraction pipeline to NetQASM for simulation and deployment on
  quantum network hardware.
\end{abstract}

\begin{CCSXML}
<ccs2012>
   <concept>
       <concept_id>10003033.10003034.10003038</concept_id>
       <concept_desc>Networks~Programming interfaces</concept_desc>
       <concept_significance>500</concept_significance>
       </concept>
   <concept>
       <concept_id>10010520.10010521.10010542.10010550</concept_id>
       <concept_desc>Computer systems organization~Quantum computing</concept_desc>
       <concept_significance>500</concept_significance>
       </concept>
 </ccs2012>
\end{CCSXML}

\ccsdesc[500]{Networks~Programming interfaces}
\ccsdesc[500]{Computer systems organization~Quantum computing}

\maketitle

\section{Introduction}
\label{sec:introduction}

Quantum communication protocols and distributed systems are an increasingly important part of quantum computing as the demand for useful quantum computing systems grows. Distributed protocols support scalable computing and allow for secure communication of both quantum and non-quantum (classical) information~\citep{cao_evolution_2022,caleffi_distributed_2024,zheng_large-scale_2026}.

As an example, consider the quantum teleportation protocol, shown in \cref{fig:teleportation-circuit}. The goal of quantum teleportation is for Alice to communicate the state of her qubit $q$ to Bob. Alice cannot physically send Bob the qubit; among other concerns, it would be prohibitively error-prone over long distances.
Instead, Alice and Bob rely on the \emph{quantum entanglement} to communicate the state. We assume that Alice and Bob can establish a pair of entangled qubits, indicated by the \text{entanglement} box.

This style of presentation is a generalization of the quantum circuit model, where primitive quantum operations (such as measurement, the $X$/$Z$/$H$ gates, as well as the two-qubit controlled-not gate \cnotinline) can be applied to qubits, represented as wires. For distributed protocols like teleportation, we indicate a visual separation between different actors in the distributed system and informally check that the only types of interaction between them correspond to either entanglement generation or classical communication.

\begin{figure}
    \begin{quantikz}
        \gategroup[2,steps=6,style={dashed,rounded corners,fill=blue!20, inner xsep=1pt}, background,label style={label position=left,anchor=west,xshift=-1cm,yshift=-0.75cm}]{Alice}
        \lstick{q~~~~~~}    & \slice{}     & \ctrl{1}                 & \gate{H} & \gate{\text{Meas}} \slice{} & \setwiretype{c}                 
            & \wire[d][2]{c}   \\
        \setwiretype{n}  & \gate[4]{\text{entanglement}} \gateoutput{$a$}
            & \targ{} \setwiretype{q}  &          & \gate{\text{Meas}}                & \wire[d][2]{c} \setwiretype{c}  
            & \wireoverride{n} \\
        \setwiretype{n}  &                           
            &                          &          &                                   &
            &                          & \setwiretype{c}        &                 \\
        \setwiretype{n}  &
            &                          &          &                                   & 
            & \setwiretype{c}  &               \\
        \gategroup[1,steps=9,style={dashed,rounded corners,fill=red!20, inner xsep=1pt}, background,label style={label position=left,anchor=west,xshift=-1cm,yshift=0cm}]{Bob}
        \setwiretype{n}  &  \gateoutput{$b$}
            & \setwiretype{q}          &          &                                   &                                 
            &   \slice{}  &          \gate{X} \wire[u][1]{c} & \gate{Z} \wire[u][2]{c} & 
    \end{quantikz}
    \caption{Quantum teleportation protocol as an informal quantum circuit diagram. Alice starts with an arbitrary qubit $q$, and by the end of the protocol, Bob should have a qubit in that same state. Dashed vertical red lines separate the protocol into four stages. Stage 1: Alice and Bob establish a pair of entangled qubits $a$ and $b$. Stage 2: Alice performs some local operations (controlled-not, hadamard (H)) to entangle her starting qubit $q$ with $a$; she then measures both qubits, resulting in two classical bits (indicated with double wires). Stage 3: Alice sends the two classical bits to Bob over ordinary classical channels. Stage 4: Bob uses the classical messages to correct his remaining qubit by conditionally applying unitary $X$ and $Z$ gates.}
    \label{fig:teleportation-circuit}
\end{figure}



To implement protocols like these on distributed quantum systems, programmers translate them into \emph{distributed processes}, where each actor in the protocol executes an independent process that interacts with other actors via entanglement generation and classical message-passing. Frameworks like NetSQUID, NetQASM, and others allow users to express processes and either simulate or deploy them to hardware~\citep{coopmans_netsquid_2021,dahlberg_netqasm_2022,cardama_netqmpi_2025}. Consider \Cref{fig:teleportation-processes}, which shows processes (defined later in \Cref{sec:network}) implementing Alice's and Bob's roles in the quantum teleportation protocol.

Unfortunately, programming with distributed processes can be repetitive and error-prone, not to mention difficult to debug and analyze. Suppose Alice forgets to send $z$ to Bob after sending $x$. Then Bob's process will be deadlocked---he will be unable to proceed as he will be waiting for Alice's message that will never come.
An even more subtle error could occur if Bob thinks that Alice is going to send $z$ before $x$. In this case, Bob won't know anything went wrong even at runtime; he would just continue his procedure with incorrect values. 

Researchers have explored a few different techniques to try to limit the possibility of such errors. 
Quantum process calculi including communicating quantum processes~\cite{gay_2005_communicating} have explored formal verification of correctness properties using model checking~\citep{davidson_2012_model}. \citet{lanese_towards_2025} integrate multiparty session types with quantum processes~\citep{lanese_towards_2025} to ensure that processes follow a protocol specified by a global type. \citet{buckley_algebraic_2024} use Kleene algebras to analyze process specifications to show how quantum data is distributed through a network.

\begin{figure}
    \centering
\begin{subfigure}[t]{0.45\textwidth} 
\scriptsize
\begin{verbatim}
# Stage 1: Alice and Bob establish entanglement
let a = entangle with Bob;

# Stage 2: Alice entangles q with a
let (q,a) = CNOT(H(q),a);
let x = Measure(q);
let z = Measure(a);

# Stage 3: Alice sends two classical bits to Bob
send x to Bob;
send z to Bob;
PA
\end{verbatim}
\caption{Alice's process}
\end{subfigure}
\begin{subfigure}[t]{0.45\textwidth} 
\scriptsize
\begin{verbatim}
# Stage 1: Alice and Bob establish entanglement
let b = entangle with Alice;

# Stage 3: Alice sends two classical bits to Bob
let x = receive from Alice;
let z = receive from Alice;

# Stage 4: Bob corrects his remaining qubit
let b = if z then Z(b) else b;
let b = if x then X(b) else b;
PB(b)
\end{verbatim}
\caption{Bob's process}
\end{subfigure}
    \caption{Implementation of the quantum teleportation protocol in the quantum process model. The syntax of processes is introduced in \Cref{sec:network}: Alice's operation \texttt{entangle with Bob} establishes an entangled pair when combined with Bob's corresponding \texttt{entangle with Alice}. \texttt{Meas} refers to quantum measurement, and \texttt{CNOT}, \texttt{H}, \texttt{X}, and \texttt{Z} are standard quantum gates.
    \texttt{PA} and \texttt{PB} are continuations---processes that Alice and Bob will execute after the teleportation protocol.
    }
    \label{fig:teleportation-processes}
\end{figure}

In this paper we propose a different approach inspired by \emph{choreographic programming}~\cite{carbone2013deadlock, montesi2023introduction}. A choreography is a global view of the behaviors of all actors in a distributed protocol that allows us to write programs in the style of the informal, top-level circuit diagrams such as the one in \Cref{fig:teleportation-circuit}. The choreography for quantum teleportation is shown in \Cref{fig:teleportation}, and bears a striking similarity to the informal description. We argue that the choreographic programming model is practical and programmatic from a user perspective compared to the process model, and in many cases are easier to reason about due to its global view of the system.

Furthermore, choreographies can be automatically projected to processes running in the process model. This procedure, called endpoint projection (EPP), comes with the guarantee that the semantics of the projected network is the same as the semantics of the choreography. As a consequence, any reasoning we can do about the choreographic language (taking advantage of the global view of the protocol) is also true of the projected processes. Concretely, the projected network is deadlock-free because the choreography is.

\begin{wrapfigure}{R}{0.5\textwidth}
\begin{center}
\scriptsize
\begin{verbatim}
# Stage 1: Alice and Bob establish entanglement
A.a <~> B.b;

# Stage 2: Alice entangles q with a
let A.(q,a) = CNOT(H(q),a);
let A.x = Measure(q);
let A.z = Measure(a);

# Stage 3: Alice sends classical bits to Bob
A.x ~> B.x;
A.z ~> B.z;

# Stage 4: Bob corrects his remaining qubit
let B.b = if z then Z(b) else b;
let B.b = if x then X(b) else b;
C(B.b)
\end{verbatim}
\end{center}
\caption{Quantum teleportation as a choreography. On the last line of the protocol, $C$ is a continuation choreography that uses the qubit $B.b$.}
\label{fig:teleportation}
\end{wrapfigure}

This paper introduces \emph{Qoreo}, a choreographic programming language for distributed quantum systems. Qoreo is made up of a local language for quantum computations along with separate languages for choreographies and distributed processes. The local language is based on the quantum lambda calculus~\cite{gay_quantum_2009}, a linearly-typed higher-order functional language with quantum gates. Choreographies support both local computations as well as communication between actors. 
We introduce a novel linear type system for choreographies and prove type safety, which implies well-typed choreographies implement well-defined quantum operations. We further define endpoint projection and prove soundness and completeness of the projected networks with well-typed source choreographies. As a consequence, we conclude that well-typed choreographies project onto deadlock-free and well-defined quantum processes. Finally, we develop a toolchain for Qoreo protocol simulation in NetQASM to demonstrate practical applicability of the system.


\paragraph{Overview of the paper.}
In Section \ref{sec:background} we discuss the state-of-the-art in
distributed quantum computing including existing implementations that
inform our work.
In  \Cref{sec:quantum} we define our local quantum language and
type theory. We formulate a type safety result for the local language
in Theorem \ref{thm:e.safety}. In Section \ref{sec:choreography} we
develop the full Qoreo choreographic language and type theory
supporting both classical and quantum communication.  We formulate a type
safety result for the choreographic language in
Theorem \ref{thm:C.safety}. In Section \ref{sec:network} we define the
network language, and in Section \ref{sec:epp} we define endpoint
projection and show that it is sound in Theorem \ref{thm:epp-sound}
and complete in Theorem \ref{thm:epp-complete}, with network safety
immediately following as Corollary \ref{cor:net.safety}.  All
definitions and results in Sections \ref{sec:quantum},
\ref{sec:choreography}, and \ref{sec:network} are fully formalized in Rocq
and are available on GitHub\footnote{\url{github.com/uvm-plaid/qoreo}}
In Section \ref{sec:extraction} we discuss the implementation of Qoreo
via extraction from Rocq to NetQASM-- the full pipeline from Qoreo to
NetQASM simulation is also provided in in supplementary materials.  In
Section \ref{sec:examples} we demonstrate practical applications of
Qoreo to examples including distributed control gates and the B92 quantum key
distribution protocol. In Section \ref{sec:related}
we make a detailed comparison with related work, though discussions
of related work also occur in context throughout the paper.
In Section \ref{sec:conclusion} we conclude with comments on future work.

\section{Background and Related Work}
\label{sec:background}
\subsection{Quantum semantics}
\label{sec:background:quantum}

The state of a qubit in a quantum computer is modeled as a superposition of the classical bits $0$ and $1$, written $\alpha \ket{0} + \beta \ket{1}$ where $\alpha,\beta \in \mathbb{C}$ and $|\alpha|^2 + |\beta|^2=1$. 
Similarly, an $n$-qubit quantum state is modeled as a superposition of the $2^n$ bitvectors of length $n$, where the sum of the squared norms of all the coefficients also sums to $1$.
This means that the size of the state space grows exponentially in the number of qubits, which is one of the core semantic advantages of quantum programming.

Quantum computation proceeds by applying unitary transformations, also called unitary gates, to the global state, interleaved with measurements that produce classical outcomes and update the post-measurement state. Measurement is probabilistic---measuring a qubit in state $\alpha \ket{0} + \beta \ket{1}$ will produce $0$ with probability $|\alpha|^2$ and $1$ with probability $|\beta|^2$.


Note that even in the distributed setting, qubits can be entangled, meaning that the global state cannot in general be decomposed into independent local states.
An example is the 2-qubit Bell state $\ket{\Phi^+} = \frac{1}{\sqrt{2}}(\ket{00} + \ket{11})$. The two qubits are entangled in that, if Alice measures the left qubit and observes $0$, then Bob's qubit is necessarily also $\ket{0}$; and similarly with $1$.

\paragraph{Density matrix semantics}

For reasoning about measurement and mixed behavior, we use partial density matrices~\cite{nielsen_quantum_2010} rather than pure states as described above. A partial density matrix on $n$ qubits is a $2^n \times 2^n$ complex matrix $\rho$ satisfying certain properties ($\rho = \rho^\dagger$, $\rho \succeq 0$, and $\mathrm{tr}(\rho) \le 1$). If we think of a pure state $\alpha \ket{0} + \beta \ket{1}$ as a state vector $\ket{\varphi} = \begin{psmallmatrix} \alpha \\ \beta \end{psmallmatrix}$, then its corresponding density matrix representation is $\ket{\varphi} \bra{\varphi} = \begin{psmallmatrix} \alpha \\ \beta \end{psmallmatrix} \begin{psmallmatrix} \alpha^\ast & \beta^\ast \end{psmallmatrix}$.

The trace of a partial density matrix corresponds to the probability of being in that state. Consider the example of measuring the two-qubit Bell state. With probability $1/2$, we will obtain the pure state $\ket{00}$, corresponding to the partial density matrix $\frac{1}{2} \ket{00}\bra{00} = \frac12 \begin{psmallmatrix} 1 & 0 \\ 0 & 0 \end{psmallmatrix}$; with the remaining $1/2$ probability, we will obtain the pure state $\ket{11}$, corresponding to the partial density matrix $\frac{1}{2} \ket{11}\bra{11} = \frac12 \begin{psmallmatrix} 0 & 0 \\ 0 & 1 \end{psmallmatrix}$

This density-matrix presentation is convenient for programming language semantics~\cite{selinger_towards_2004} as it eliminates the need for a probabilistic semantics, replacing it instead with a nondeterministic semantics over partial density matrices.

\subsection{Entanglement generation and simulating quantum distributed systems}

What does it mean to communicate quantum information between actors in a distributed system? The quantum teleportation algorithm shows us that it is possible to ``send'' quantum information, but doing so requires both classical communication and establishing entanglement between distant actors. Other algorithms can be implemented more efficiently without teleportation---for example, applying a distributed two-qubit unitary (see \cref{sec:example:controlled}) requires only one bit of classical communication in each direction rather than teleportation's two~\citep{caleffi_distributed_2024}.

Instead, the accepted quantum communication primitive is entanglement generation---generating qubits in the entangled Bell state $\ket{\Phi^+}=\frac12(\ket{00}+\ket{11})$~\citep{kozlowski_designing_2020, coopmans_netsquid_2021, azuma_quantum_2023}. Remote entanglement generation is imagined as one of the foundational services provided by a quantum network~\cite{wehner2018quantum}.
Note that unlike classical communication, entanglement generation is undirected---it is not a message from Alice to Bob, or vice versa.

\subsection{Linear types and quantum programming}

In the quantum teleportation algorithm, note that after Alice makes her measurements and sends classical bits to Bob, the qubits $A.a$ and $B.b$ are no longer entangled. In fact, $A.a$ has been measured, so it is in a collapsed (classical) state. Quantum communication protocols often assume that a qubit is no longer accessible after it has been measured, and thus the network is free to reallocate that qubit for future entanglement generation. There are other restrictions as well---two-qubit unitaries cannot be applied to the same qubit twice, as the \emph{no-cloning} principle tells us that the state of a qubit cannot be duplicated. Similarly, the \emph{no-discarding} principle tells us that resetting a qubit to a classical state requires measurement, and thus may have a global effect on the quantum state.
Quantum programming languages like the quantum lambda calculus (on which we base our local quantum language in \Cref{sec:quantum}), use linear or substructural type systems to enforce the no-cloning and no-discarding properties~\citep{gay_quantum_2009,green_quipper_2013,koch_imperative_2025}, including in the setting of quantum networks via multiparty session types~\citep{lanese_towards_2025}. The relationship between choreographies and linear logic has been explored~\citep{carbone_choreographies_2018} but as far as we know this work would be the first to integrate a linear type system with choreographies.




\section{The local quantum language}
\label{sec:quantum}
To define Qoreo we begin by presenting a simple quantum programming language based on the quantum lambda calculus~\cite{gay_quantum_2009} that we assume can be executed on individual nodes in the quantum distributed system. In practice this language could be compiled and optimized further, or indeed replaced by another quantum language.

The advantage of the quantum lambda calculus is its strong linear type system and well-defined semantics.
In the quantum lambda calculus, quantum data like qubits and tuples of qubits have linear types while classical data like bits have non-linear types and thus can be duplicated. Linear types are used to ensure that well-typed programs correspond to valid quantum computations. In particular, they ensure that well-typed programs do not try to violate the no-cloning principle of quantum mechanics. The no-cloning principle is a consequence of the fact that quantum computing is built on linear transformations---it says that the state of an arbitrary qubit $\ket{\varphi}$ cannot be cloned to produce a pair of separable qubits $\ket{\varphi} \otimes \ket{\varphi}$. In practice, this means that we cannot apply a two-qubit gate or quantum function to the same qubit twice: $g(q,q)$ is not allowed.

\begin{figure}
    \centering
\begin{align*}
    e &::= x \mid \letin{x}{e}{e'} \mid !e \mid \letin{!x}{e}{e'} \\
    &\mid b \mid \IF{e}{e_1}{e_2} \mid (e_1,e_2) \mid \letin{(x_1,x_2)}{e}{e'} \\
    &\mid \lambda x.e \mid \fix{f}{x}{e} \mid e~e' 
    \mid \NEW{e} \mid \MEAS{e} \mid \REF{q} \mid U~e \tag{expressions} 
    \\
    b &\in \{0,1\} \tag{bits} \\
    x &\in \mathcal{V} \tag{variables} \\
    q &\in \mathcal{Q} \tag{qubit references} \\
    U &::= X \mid Y \mid Z \mid H \mid S \mid S^\dag \mid T \mid T^\dag \mid \textrm{CNOT} \mid \textrm{SWAP} \tag{unitary gates}
\end{align*}
    \caption{Qoreo syntax}
    \label{fig:syntax}
\end{figure}

The syntax is defined in \Cref{fig:syntax}.
We assume we have a built-in set of universal gates (in this case Clifford+T).
The main difference between this language and the original presentation of the quantum lambda calculus is that we syntactically distinguish qubit references from free variables. Qubit references point to an index into the quantum state; they are considered values.

\newcommand{\doublerule}[1][.4pt]{%
  \noindent
  \makebox[0pt][l]{\rule[.7ex]{\linewidth}{#1}}%
  \rule[.3ex]{\linewidth}{#1}}

\begin{figure}
\footnotesize
\begin{mathpar}
~
    \inferrule*[right=letC]
    {\cfg{e_1}{\Theta}{\rho} \rightarrow \cfg{e_1'}{\Theta'}{\rho'}}
    {\cfg{\letin{x}{e_1}{e_2}}{\Theta}{\rho} 
        \rightarrow 
     \cfg{\letin{x}{e_1'}{e_2}}{\Theta'}{\rho'}}

    \inferrule*[right=let!C]
    {\cfg{e_1}{\Theta}{\rho} \rightarrow \cfg{e_1'}{\Theta'}{\rho'}}
    {\cfg{\letin{!x}{e_1}{e_2}}{\Theta}{\rho} 
        \rightarrow 
     \cfg{\letin{!x}{e_1'}{e_2}}{\Theta'}{\rho'}}

    \inferrule*[right=ifC]
    {
        \cfg{e}{\Theta}{\rho} \rightarrow \cfg{e'}{\Theta'}{\rho'}
    }
    {
        \cfg{\IF{e}{e_1}{e_0}}{\Theta}{\rho} 
        \rightarrow 
        \cfg{\IF{e'}{e_1}{e_0}}{\Theta'}{\rho'}
    }

    \newline

    \inferrule*[right=pairC1]
    {
        \cfg{e_1}{\Theta}{\rho}
        \rightarrow
        \cfg{e_1'}{\Theta'}{\rho'}
    }
    {
        \cfg{(e_1,e_2)}{\Theta}{\rho} 
        \rightarrow
        \cfg{(e_1',e_2)}{\Theta'}{\rho'}
    }

    \inferrule*[right=pairC2]
    {
        \cfg{e_2}{\Theta}{\rho}
        \rightarrow
        \cfg{e_2'}{\Theta'}{\rho'}
    }
    {
        \cfg{(v_1,e_2)}{\Theta}{\rho} 
        \rightarrow
        \cfg{(v_1,e_2')}{\Theta'}{\rho'}
    }

    \inferrule*[right=letPairC]
    {
        \cfg{e}{\Theta}{\rho}
        \rightarrow
        \cfg{e'}{\Theta'}{\rho'}
    }
    {
        \cfg{\letin{(x_1,x_2)}{e}{e''}}{\Theta}{\rho} 
        \rightarrow 
        \cfg{\letin{(x_1,x_2)}{e'}{e''}}{\Theta'}{\rho'}
    }

    \inferrule*[right=appC1]
    {
        \cfg{e}{\Theta}{\rho}
        \rightarrow
        \cfg{e'}{\Theta'}{\rho'}
    }
    {
        \cfg{e~e''}{\Theta}{\rho}
        \rightarrow
        \cfg{e'~e''}{\Theta'}{\rho'}
    }

    \inferrule*[right=appC2]
    {
        \cfg{e}{\Theta}{\rho}
        \rightarrow
        \cfg{e'}{\Theta'}{\rho'}
    }
    {
        \cfg{v~e}{\Theta}{\rho}
        \rightarrow
        \cfg{v~e'}{\Theta'}{\rho'}
    }

    \inferrule*[right=newC]
    { \cfg{e}{\Theta}{\rho}
      \rightarrow
      \cfg{e'}{\Theta'}{\rho'}
    }
    {
        \cfg{\NEW{e}}{\Theta}{\rho}
        \rightarrow
        \cfg{\NEW{e'}}{\Theta'}{\rho'}
    }

    \inferrule*[right=measC]
    { \cfg{e}{\Theta}{\rho}
      \rightarrow
      \cfg{e'}{\Theta'}{\rho'}
    }
    {
        \cfg{\MEAS{e}}{\Theta}{\rho}
        \rightarrow
        \cfg{\MEAS{e'}}{\Theta'}{\rho'}
    }

    \inferrule*[right=UC]
    { \cfg{e}{\Theta}{\rho}
      \rightarrow
      \cfg{e'}{\Theta'}{\rho'}
    }
    {
        \cfg{U~e}{\Theta}{\rho}
        \rightarrow
        \cfg{U~e'}{\Theta'}{\rho'}
    }
    
\doublerule
\end{mathpar}
\begin{align*}
    \cfg{\letin{x}{v}{e'}}{\Theta}{\rho} 
        &\rightarrow 
    \cfg{e'[v/x]}{\Theta}{\rho}
    &\textsc{letB}
    \\
    \cfg{\letin{!x}{!e}{e'}}{\Theta}{\rho} 
         &\rightarrow 
      \cfg{e'[e/x]}{\Theta}{\rho}
      &\textsc{let!B}
    \\
    \cfg{\IF{b}{e_1}{e_0}}{\Theta}{\rho}
         &\rightarrow
         \cfg{e_b}{\Theta}{\rho}
         &\textsc{ifB} 
    \\
    \cfg{\letin{(x_1,x_2)}{(v_1,v_2)}{e'}}{\Theta}{\rho} 
         &\rightarrow 
         \cfg{e'[e_1/x_1, e_2/x_2]}{\Theta}{\rho}
          &\textsc{letPairB}
    \\
    \cfg{(\lambda x.e')~v}{\Theta}{\rho}
            &\rightarrow
            \cfg{e'[v/x]}{\Theta}{\rho}
            &\textsc{appB}
    \\
    \cfg{(\fix{f}{x}{e'})~(!e)}{\Theta}{\rho}
            &\rightarrow
            \cfg{e'[e/x,(\fix{f}{x}{e'})/f]}{\Theta}{\rho}
            &\textsc{fixB}
    %
    %
    %
    %
    %
\end{align*}
\begin{mathpar}
\doublerule

    \inferrule*[right=newB]
    {
        q \notin \domain{\Theta} \\
        n = \text{num qubits of}~\rho
    }
    {\cfg{\NEW{b}}{\Theta}{\rho} \rightarrow \cfg{\REF{q}}{ \left(\Theta,q \mapsto n\right)}{\rho \otimes \ket{b}\bra{b}}
    }
    
    \inferrule*[right=UB]
    {
        \Theta(q_i) = j_i \\
        \rho' = \interp{U}_{j_1,\ldots,j_k} \rho \interp{U}_{j_1,\ldots,j_k}^\dagger
    }
    {
        \cfg{U(\REF{q_1},\ldots,\REF{q_k})}{\Theta}{\rho}
        \rightarrow
        \cfg{(\REF{q_1},\ldots,\REF{q_k})}{\Theta}{\rho'}
    }

    \inferrule*[right=measB0]
    {
      \Theta(q) = i \\
      \Theta' = \Theta \setminus \{q\} \\
      \rho' = P^i_0 \rho P^i_0 
    }
    { \cfg{\MEAS{\REF{q}}}{\Theta}{\rho}
      \rightarrow
      \cfg{0}{\Theta'}{\rho'}
    }

    \inferrule*[right=measB1]
    {
      \Theta(q) = i \\
      \Theta' = \Theta \setminus \{q\} \\
      \rho' = P^i_1 \rho P^i_1 
    }
    { \cfg{\MEAS{\REF{q}}}{\Theta}{\rho}
      \rightarrow
      \cfg{1}{\Theta'}{\rho'}
    }
\end{mathpar}
\caption{Step relation rules for the local quantum language. The first group consists of contextual reduction rules (\textsc{C}). The second group consists of the usual $\beta$-reduction rules (\textsc{B}) for a call-by-value linear $\lambda$-calculus. Below the line are quantum-specific $\beta$-reduction rules. The syntax $e'[e/x]$ refers to capture-avoiding substitution defined in the usual way. 
}
\label{fig:local-step}
\end{figure}

\begin{figure*}
    \centering
    \footnotesize
    \begin{mathpar}
~
    \inferrule*[right=lvar]
    {x \notin \Gamma}
    {\Gamma;x:\tau;\emptyset \vdash x : \tau}
    
    \inferrule*[right=cvar]
    {(x:\tau) \in \Gamma}
    {\Gamma;\emptyset;\emptyset \vdash x : \tau}

    \inferrule*[right=$!$]
    {
        \Gamma;\emptyset;\emptyset \vdash e : \tau
    }
    { \Gamma; \emptyset; \emptyset \vdash !e : !\tau }

    \inferrule*[right=bit]
    {b \in \{0,1\} }
    {\Gamma;\emptyset;\emptyset \vdash b : \bit}

    \inferrule*[right=let]
    {
        \Gamma;\Delta_1;\Theta_1 \vdash e : \tau 
        \\
        \Gamma;\Delta_2,x:\tau;\Theta_2 \vdash e' : \tau' 
        \\
        \Delta_1 \bot \Delta_2 \\
        \Theta_1 \bot \Theta_2 \\
        x \notin \domain{\Gamma,\Delta_2}
    }
    {\Gamma;\Delta_1, \Delta_2; \Theta_1,\Theta_2 \vdash \letin{x}{e}{e'} : \tau'}

    \inferrule*[right=let-$!$]
    { \Gamma; \Delta_1;\Theta_1 \vdash e : !\tau \\
      \Gamma,x:\tau ; \Delta_2; \Theta_2 \vdash e' : \tau'  \\
      \Delta_1 \bot \Delta_2 \\
      \Theta_1 \bot \Theta_2 \\
      x~\notin \domain{\Delta_2}
    }
    {\Gamma; \Delta_1,\Delta_2; \Theta_1,\Theta_2 \vdash \letin{!x}{e}{e'} : \tau'}

    \inferrule*[right=if]
    {
        \Gamma;\Delta;\Theta \vdash e : \bit \\
        \Gamma;\Delta';\Theta' \vdash e_1 : \tau \\
        \Gamma;\Delta';\Theta' \vdash e_0 : \tau \\
        \Delta \bot \Delta' \\
        \Theta \bot \Theta'
    }
    {\Gamma;\Delta,\Delta';\Theta,\Theta' \vdash \IF{e}{e_1}{e_0} : \tau}

    \inferrule*[right=$\otimes$]
    {
        \Gamma; \Delta_1; \Theta_1 \vdash e_1 : \tau_1 \\
        \Gamma; \Delta_2; \Theta_2 \vdash e_2 : \tau_2 \\
        \Delta_1 \bot \Delta_2 \\
        \Theta_1 \bot \Theta_2
    }
    { \Gamma; \Delta_1,\Delta_2; \Theta_1,\Theta_2 \vdash (e_1,e_2) : \tau_1 \otimes \tau_2}

    \inferrule*[right=let-$\otimes$]
    {
        \Gamma;\Delta;\Theta \vdash e : \tau_1 \otimes \tau_2 \\
        \Gamma;\Delta',x_1:\tau_1,x_2:\tau_2;\Theta' \vdash e' : \tau' \\
        \Delta \bot \Delta' \\
        \Theta \bot \Theta' \\
         x_1,x_2~\notin \dom{\Delta'},\dom{\Theta'}
    }
    { \Gamma; \Delta,\Delta'; \Theta, \Theta' \vdash \letin{(x_1,x_2)}{e}{e'} : \tau' }

        \inferrule*[right=lambda]
    {
        \Gamma;\Delta,x:\tau_1;\Theta \vdash e : \tau_2 \\
        x \notin \domain{\Gamma,\Delta}
    }
    { \Gamma;\Delta;\Theta \vdash \lambda x.e : \tau_1 \lolli \tau_2 }
    
    \inferrule*[right=fix]
    {
        \Gamma,f:!\tau_1 \lolli \tau_2, x:\tau_2;x:\tau_1; \emptyset \vdash e : \tau_2 \\
        f,x \not\in \domain{\Delta}
    }
    { \Gamma;\emptyset;\emptyset \vdash \fix{f}{x}{e} : !\tau_1 \lolli \tau_2 }

    \inferrule*[right=app]
    {
        \Gamma;\Delta;\Theta \vdash e : \tau_1 \lolli \tau_2 \\
        \Gamma;\Delta';\Theta' \vdash e' : \tau_1 \\
        \Delta \bot \Delta' \\
        \Theta \bot \Theta'
    }
    {\Gamma;\Delta,\Delta';\Theta,\Theta' \vdash e e' : \tau_2}

\doublerule

    \inferrule*[right=meas]
    { \Gamma;\Delta;\Theta : e : \qubit }
    { \Gamma; \Delta;\Theta \vdash \MEAS{e} : \bit }

    \inferrule*[right=qref]
    {~}
    { \Gamma;\emptyset;q \vdash \REF{q} : \qubit}

    \inferrule*[right=new]
    {\Gamma;\Delta;\Theta \vdash e : \bit}
    {\Gamma;\Delta;\Theta \vdash \NEW{e} : \qubit }

    \inferrule*[right=unitary]
    {\Gamma;\Delta;\Theta \vdash e : \tau \\
     \text{type-of}(U) = \tau }
    {\Gamma;\Delta;\Theta \vdash U~e : \tau}
~
    \end{mathpar}
    \caption{Typing rules for the local quantum language. Above the line are rules for the linear lambda calculus, and below the line are quantum-specific typing rules. We write $\Delta_1 \bot \Delta_2$ to say the domains of $\Delta_1$ and $\Delta_2$ are disjoint, and similarly for $\Theta_1 \bot \Theta_2$. 
    }
    \label{fig:local-types}
\end{figure*}

Throughout this paper, we will use finite maps to manage both free variables and qubit references, for which we use the following conventions.

\begin{definition}
\label{def:mapping}
Let $M_1,M_2$ be finite maps with keys $k$. We write $M_1,M_2$ to mean the
concatenation of the two maps, e.g. $M$ such that $M(k) = M_2(k)$ if $k \in \dom{M_2}$, otherwise $M(k) = M_1(k)$.
We write $M_1 \bot M_2$ to mean $\dom{M_1} \cap \dom{M_2} = \varnothing$. 
\end{definition}

\subsection{Operational semantics}

We define a small-step call-by-value operational semantics on configurations of the form $\cfg{e}{\Theta}{\rho}$, where $e$ is a closed expression, $\rho$ is a partial density matrix, and $\Theta$ is a map from qubit references in $e$ to natural number indexes into $\rho$. 
\begin{align*}
    \Theta &::= \emptyset \mid \Theta,q \mapsto i
    \tag{qubit reference maps}
\end{align*}

\begin{definition}
    We say $\rho$ is well-scoped with respect to $\Theta$, written $\Theta \vdash \rho$, if the indices in $\Theta$ all point to valid indices in $\rho$.
\end{definition}

We assume that each unitary $U$ on $k$ qubits has a corresponding unitary operation $\interp{U}_{i_1,\ldots,i_k}$, and transforms the state $\rho$ into $\interp{U}_{i_1,\ldots,i_k} \rho \interp{U}_{i_1,\ldots,i_k}^\dagger$.

As an example, consider applying $H$, a single-qubit unitary gate, to the second qubit in a 3-qubit system.
The configuration would have the form
$\cfg{H(q)}{q \mapsto 1}{\rho}$. In order to appropriately take a step, we will interpret $H$ as a 3-qubit unitary surrounded with identity matrices on either side: $\interp{H}_1 = I_2 \times H \otimes I_2$. The resulting configuration after the step would have the form $\cfg{q}{q \mapsto 1}{(I \otimes H \otimes I) \rho (I \otimes H \otimes I)}$

The step relation, written $\cfg{e}{\Theta}{\rho} \rightarrow \cfg{e'}{\Theta'}{\rho'}$, is shown in \Cref{fig:local-step}. It is made up of $\beta$-reduction rules and contextual rules.
A value is either a bit, a qubit reference, a function, or an expression of the form $!e$. Note that we do not reduce under the bang operator $!e$; it can be thought of as a suspended computation.
\begin{align*}
    v &::= b \mid \REF{q} \mid \lambda x.e \mid \fix{f}{x}{e} \mid !e 
        \tag{values}
\end{align*}

A measurement $\MEAS{\REF{q}}$ is non-deterministic. It will step to either $0$ or $1$, and the probability of each step will be recorded in the trace of the partial density matrix $\rho'$ of the output configuration. In the rule \textsc{meas-B0}, conjugating by the projector $P_0^i = I \otimes \ket{0}\bra{0} \otimes I$ results in the quantum state obtained when measuring $0$; similarly for $P_1^i$.

We write $\cfg{e}{\Theta}{\rho} \rightarrow^\ast \cfg{e'}{\Theta'}{\rho'}$ for the reflective transitive closure of the step relation, i.e.,  $\cfg{e}{\Theta}{\rho}$ can take zero or more steps to reach $\cfg{e'}{\Theta'}{\rho'}$.

\subsection{Typing judgment}

Types include base types for bits and qubits, as well as pairs, linear function types, and duplicable types of the form $!\tau$. $\Gamma$ and $\Delta$ are both typing contexts, with the convention that $\Gamma$ is used for duplicable variables and $\Delta$ is used for non-duplicable variables.
\begin{align*}
    \tau &::= \bit \mid \qubit \mid \tau_1 \otimes \tau_2 \mid \tau_1 \lolli \tau_2 \mid !\tau \tag{types}
    \\
    \Gamma,\Delta &::= \emptyset \mid \Delta,x:\tau 
        \tag{typing contexts}
\end{align*}
The type of a unitary gate is the signature of qubits it acts on:
\begin{align*}
    \text{type}(U) = \begin{cases*}
        \qubit \otimes \qubit & U = \text{CNOT} ~\text{or}~ U = \text{SWAP} \\
        \qubit &\text{otherwise}
    \end{cases*}
\end{align*}

The typing judgment, defined in \cref{fig:local-types}, has the form $\Gamma;\Delta;\Theta \vdash e : \tau$, where $\Gamma$ gives types to the non-linear variables and $\Delta$ gives types to the linear variables. The map $\Theta$ is the same map of qubit references to positions used in the operational semantics. Like linear variables, qubit references are treated linearly, and cannot be duplicated or discarded.

\subsection{Meta-theory}

We prove standard type-safety results in Rocq---type safety via progress and preservation. Below, the title of each lemma or theorem refers to the corresponding lemma in the Rocq proof development.

The proof of progress is fairly straightforward by induction on the typing judgment.

\begin{lemma}[\texttt{Qoreo.Expr.progress}]
    If $\emptyset;\emptyset;\Theta \vdash e : \tau$ and $\Theta \vdash \rho$, then either $e$ is a value, or there exists some configuration $\cfg{e'}{\Theta'}{\rho'}$ such that
    $\cfg{e}{\Theta}{\rho} \rightarrow \cfg{e'}{\Theta'}{\rho'}$.
\end{lemma}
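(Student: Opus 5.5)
We prove the statement by induction on the derivation of $\Gamma;\Delta;\Theta \vdash e : \tau$. We first generalize it to arbitrary contexts $\Gamma,\Delta$ subject to the side condition $\Gamma = \Delta = \emptyset$, so that the induction hypothesis is available for subderivations; in those subderivations the split contexts $\Delta_1,\Delta_2$ are still empty. Before the induction we establish a \emph{canonical forms} lemma, also by inversion on typing, for empty $\Gamma$ and $\Delta$. It says that a value of type $\bit$ is $0$ or $1$; a value of type $\qubit$ is a reference $\REF{q}$ with $\Theta = q$, so $q \in \domain{\Theta}$; a value of type $\tau_1 \otimes \tau_2$ is a pair $(v_1,v_2)$ of values; a value of type $\tau_1 \lolli \tau_2$ is $\lambda x.e$ or $\fix{f}{x}{e}$; and a value of type $!\tau$ is $!e$. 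We need one refinement: a function value of type $!\tau_1 \lolli \tau_2$ may be a $\lambda$ or a $\mathrm{fix}$, and in the $\mathrm{fix}$ case the argument must itself be a bang by canonical forms. We also want a small lemma that $\Theta_1,\Theta_2 \vdash \rho$ implies $\Theta_i \vdash \rho$, so the well-scopedness hypothesis passes to the subterms typed in the split rules.

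The cases then follow the usual shape. The \textsc{lvar} and \textsc{cvar} cases are impossible, since the contexts are empty. The cases \textsc{bit}, \textsc{qref}, \textsc{lambda}, \textsc{fix} and $!$ give values directly. For each elimination form (\textsc{let}, \textsc{let-}$!$, \textsc{if}, \textsc{let-}$\otimes$, \textsc{app}, \textsc{meas}, \textsc{new}, \textsc{unitary}) and for the pair rule, we apply the induction hypothesis to the principal subterm. If it steps, the matching contextual rule (\textsc{letC}, \textsc{ifC}, \textsc{appC1}, and so on) lifts the step; here we use that a step of a subterm under $\Theta_1 \vdash \rho$ is also a step under the full $\Theta$. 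This requires that the step relation be insensitive to extra entries of $\Theta$, or else we invoke the hypothesis directly at the full $\Theta$. If the subterm is a value, canonical forms supply the matching $\beta$-rule: \textsc{letB}, \textsc{let!B}, \textsc{ifB}, \textsc{letPairB}, and \textsc{appB}/\textsc{fixB}. For \textsc{fixB} we also run the induction hypothesis on the argument and, once it is a value, use canonical forms to see that it is $!e$. For pairs we step the left component first and then the right, using \textsc{pairC1} and \textsc{pairC2}.

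The quantum cases are the only ones that need side conditions. \textsc{newB} needs a fresh $q \notin \domain{\Theta}$; one exists because $\Theta$ is finite and $\mathcal{Q}$ is infinite. \textsc{measB0} needs $\Theta(q)$ to be defined, which we get from canonical forms ($\Theta = q$). \textsc{UB} needs the argument to be exactly a tuple of references $(\REF{q_1},\ldots,\REF{q_k})$ with indices $\Theta(q_i)$ that are valid in $\rho$. For CNOT and SWAP, canonical forms at $\qubit\otimes\qubit$ give a pair of references, and the linear split $\Theta_1 \bot \Theta_2$ makes $q_1 \neq q_2$, so $\interp{U}_{j_1,j_2}$ is well-defined. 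The hypothesis $\Theta \vdash \rho$ makes the indices valid, which is exactly where it is used.

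I expect the main obstacle to be the bookkeeping of $\Theta$ across context splits. The $\Theta$ in each induction hypothesis is the sub-map $\Theta_i$, while the step must take place in the whole configuration. I would first try proving progress with the configuration's map kept separate from the typing map: assume $\Theta_{\mathrm{typ}} \subseteq \Theta_{\mathrm{cfg}}$ and $\Theta_{\mathrm{cfg}} \vdash \rho$. This makes the induction go through cleanly, and the stated lemma is then the special case of equality. A secondary nuisance is that \textsc{letPairB} as written substitutes $e_1,e_2$ for $v_1,v_2$; we read those as $v_1,v_2$.
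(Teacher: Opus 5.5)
Your proposal is correct and follows the paper's route, which is simply induction on the typing judgment. Your canonical-forms lemma and your generalization that separates the typing map $\Theta$ from the configuration's map (so that \textsc{newB} can choose a reference fresh for the whole configuration) are the bookkeeping that induction needs. Your canonical form for $\tau_1 \otimes \tau_2$ treats $(v_1,v_2)$ as a value, which the paper's value grammar omits. That reading is required: without it the well-typed pair $(0,0)$ would be stuck, and \textsc{pairC2}, \textsc{letPairB} and \textsc{UB} clearly assume it.
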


Proving type preservation requires the usual substitution lemmas, both for linear and non-linear variables.

\begin{lemma}~

\begin{enumerate}
\item \texttt{Qoreo.Expr.wt\_subst}: Suppose $\emptyset;\emptyset;\Theta_1 \vdash e : \tau$
    and $\Gamma;\Delta,x:\tau;\Theta_2 \vdash e' : \tau'$ where $\Theta_1 \bot \Theta2$ and $x \notin \domain{\Gamma,\Delta}$.
    Then $\Gamma;\Delta;\Theta_1,\Theta_2 \vdash e'[e/x] : \tau'$. 

    \item \texttt{Qoreo.Expr.wt\_subst\_bang}: If $\emptyset;\emptyset;\emptyset \vdash e : \tau$
    and $\Gamma,x:\tau;\Delta;\Theta \vdash e' : \tau'$, then $\Gamma;\Delta;\Theta \vdash e'[e/x] : \tau'$.
\end{enumerate}
\end{lemma}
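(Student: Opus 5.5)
Both parts go by induction on the typing derivation of $e'$ (part 1: $\Gamma;\Delta,x:\tau;\Theta_2 \vdash e' : \tau'$; part 2: $\Gamma,x:\tau;\Delta;\Theta \vdash e' : \tau'$), generalizing over $\Gamma$, $\Delta$, $\Theta_2$ and $\tau'$. Two auxiliary lemmas are needed first, each by a routine induction on typing. The first is \emph{weakening of the non-linear context}: if $\Gamma;\Delta;\Theta \vdash e : \tau$ then $\Gamma,\Gamma';\Delta;\Theta \vdash e : \tau$ whenever the domains do not clash. This lets us move the closed term $e$ (typed under $\emptyset$) into whatever $\Gamma$ the induction has reached. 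The second is an \emph{exchange/permutation} lemma for $\Delta$ and $\Theta$ as finite maps. With it, a split $\Delta,x:\tau = \Delta_1,\Delta_2$ can always be rearranged so that $x$ sits at the end of the relevant component. We also record the standard fact that if $x \notin \dom{\Gamma,\Delta}$ and $x$ is typed linearly, then $x$ occurs free in $e'$ exactly where the linear context says it does.

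\textbf{Part 1.} The case \textsc{lvar} with $e' = x$ forces $\Delta = \emptyset$ and $\Theta_2 = \emptyset$. The goal $\Gamma;\emptyset;\Theta_1 \vdash e : \tau$ then follows from weakening. Cases \textsc{cvar}, \textsc{bit}, \textsc{qref}, \textsc{!} and \textsc{fix} cannot occur, because each has an empty linear context and so cannot contain $x$. (For \textsc{fix}, $x$ is required to be absent from $\Gamma$, and a binder named $x$ is excluded by $\alpha$-renaming.) The multiplicative cases \textsc{let}, \textsc{let-!}, \textsc{if}, $\otimes$, \textsc{let-}$\otimes$ and \textsc{app} split the linear context. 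Since $\Delta_1 \bot \Delta_2$, $x$ lies in exactly one part. We apply the induction hypothesis to that premise only, and observe that $x$ is not free in the other subterms, so substitution leaves them unchanged. The qubit contexts recombine as $\Theta_1,\Theta_2^{(i)},\Theta_2^{(j)}$, and the disjointness side conditions follow from $\Theta_1 \bot \Theta_2$. In the \textsc{if} case both branches share $\Delta'$; if $x\in\Delta'$ we apply the induction hypothesis to both branches, and each yields $\Theta_1,\Theta'$, which is consistent. Binder cases (\textsc{lambda}, \textsc{let}, \textsc{let-!}, \textsc{let-}$\otimes$) use exchange to move $x$ past the bound variable, and rely on the capture-avoiding definition of substitution. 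Because $e$ is closed, there is no capture to worry about. The remaining cases \textsc{meas}, \textsc{new} and \textsc{unitary} follow immediately from the induction hypothesis.

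\textbf{Part 2.} The same induction applies, but now $x$ may occur in every subterm, so the induction hypothesis is applied to all premises. Since $e$ uses no linear resources and $\Theta=\emptyset$ for it, the contexts are unchanged. The only interesting case is \textsc{cvar} with $e'=x$, which is closed by weakening. In \textsc{let-!} and \textsc{fix} the context $\Gamma$ is extended, so we need exchange in $\Gamma$ to keep $x$ available. In the \textsc{fix} case the premise has empty $\Delta$ and $\Theta$, so the induction hypothesis gives exactly the required premise.

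I expect the main obstacle to be bookkeeping rather than mathematics. First, the concatenation convention of \Cref{def:mapping} is right-biased, so a map split must be rearranged carefully before $x$ can be located in one side; in Rocq this becomes a lemma about finite-map union and permutation. Second, the \textsc{if} rule reuses the same $\Delta',\Theta'$ in both branches. I would first prove the map-permutation lemmas and then let the case analysis be largely automated.
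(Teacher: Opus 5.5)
Your proposal is correct and takes essentially the route the paper relies on: the paper gives no written proof, only calling these ``the usual substitution lemmas'' mechanized in Rocq, and your argument is that usual proof, namely induction on the typing derivation of $e'$, with weakening of the non-linear context closing the variable case. One point deserves to be stated explicitly: weakening $\emptyset;\emptyset;\Theta_1 \vdash e : \tau$ to an arbitrary $\Gamma$ holds only after $\alpha$-renaming the bound variables of $e$ away from $\dom{\Gamma}$, because the \textsc{lambda} and \textsc{lvar} rules carry side conditions that mention $\Gamma$ (for instance, the literal term $\lambda y.y$ is not typeable in a context with $y \in \dom{\Gamma}$), so ``the domains do not clash'' must also cover the binders of $e$.
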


Preservation also requires two key lemmas about how qubit reference contexts can change in a step relation depending on which references are used in the expression.

\begin{lemma}[\texttt{Qoreo.Expr.step\_weakening}]
    \label{lemma:e.step_weakening}
    If $\cfg{e}{\Theta_1}{\rho} \rightarrow \cfg{e'}{\Theta_1'}{\rho'}$
    and both $\Theta_1 \bot \Theta_2$
    and $\Theta_1' \bot \Theta_2$, then
    $\cfg{e}{\Theta_1,\Theta_2}{\rho} \rightarrow \cfg{e'}{\Theta_1',\Theta_2}{\rho'}$.
\end{lemma}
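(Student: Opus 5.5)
We proceed by induction on the derivation of $\cfg{e}{\Theta_1}{\rho} \rightarrow \cfg{e'}{\Theta_1'}{\rho'}$, keeping $\Theta_2$ fixed but universally quantified so that the induction hypothesis can be reused in each congruence case. The relevant properties of map concatenation from \cref{def:mapping} are simple: whenever $\Theta_1 \bot \Theta_2$, the map $\Theta_1,\Theta_2$ agrees with $\Theta_1$ on $\dom{\Theta_1}$, and $\dom{\Theta_1,\Theta_2} = \dom{\Theta_1} \cup \dom{\Theta_2}$.

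\textbf{Contextual rules.} These rules (\textsc{letC}, \textsc{let!C}, \textsc{ifC}, \textsc{pairC1}, \textsc{pairC2}, \textsc{letPairC}, \textsc{appC1}, \textsc{appC2}, \textsc{newC}, \textsc{measC}, \textsc{UC}) follow directly. The premise is a step $\cfg{e_0}{\Theta_1}{\rho} \rightarrow \cfg{e_0'}{\Theta_1'}{\rho'}$ with the \emph{same} reference maps as the conclusion. The disjointness hypotheses therefore transfer unchanged, and we apply the induction hypothesis and then the same contextual rule.

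\textbf{Classical $\beta$-rules.} In \textsc{letB}, \textsc{let!B}, \textsc{ifB}, \textsc{letPairB}, \textsc{appB} and \textsc{fixB} the rule is parametric in $\Theta$ and $\rho$, with $\Theta_1' = \Theta_1$. So the same rule applies verbatim at $\Theta_1,\Theta_2$.

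\textbf{Quantum $\beta$-rules.} These are where the work lies.
\begin{itemize}
\item For \textsc{UB}, the premises are $\Theta_1(q_i) = j_i$. Since $q_i \in \dom{\Theta_1}$ and $\Theta_1 \bot \Theta_2$, we have $(\Theta_1,\Theta_2)(q_i) = j_i$. The premise about $\rho'$ is unchanged, so the rule fires with the same $\rho'$.
\item For \textsc{measB0} and \textsc{measB1}, the lookup $\Theta_1(q)=i$ transfers in the same way. We must also show $(\Theta_1,\Theta_2)\setminus\{q\} = (\Theta_1\setminus\{q\}),\Theta_2$. This holds because $q \notin \dom{\Theta_2}$, again by disjointness. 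The projector $P^i_b$ is unchanged.
\item For \textsc{newB}, we have $\Theta_1' = \Theta_1, q \mapsto n$ with $q \notin \dom{\Theta_1}$. We need $q \notin \dom{\Theta_1,\Theta_2}$, and this is exactly what the second hypothesis $\Theta_1' \bot \Theta_2$ provides. It is the only case that uses that hypothesis. The index $n$ depends only on $\rho$, so it is unchanged. We must then show $(\Theta_1,\Theta_2),q\mapsto n = (\Theta_1,q\mapsto n),\Theta_2$, which holds as equality of finite maps by disjointness.
\end{itemize}

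The main obstacle is therefore not conceptual. It is the bookkeeping equalities on finite maps: reordering concatenations, and the commutation of removal with concatenation under disjointness. In Rocq these become auxiliary lemmas about the map library, proved by extensionality. I would prove them first, stated generally for disjoint maps, and then discharge the \textsc{newB} and \textsc{measB} cases by rewriting. If the Rocq proof treats maps up to extensional equality rather than syntactic equality, then the step relation may first need to be shown closed under extensionally equal $\Theta$. That is another small lemma to establish up front.
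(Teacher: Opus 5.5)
Your proof is correct and is essentially the argument behind the mechanized lemma; the paper itself gives no written proof and defers to the Rocq development. The argument is induction on the step derivation: the contextual and classical $\beta$-cases are immediate, and the only real content is the finite-map bookkeeping in \textsc{UB}, \textsc{measB0}/\textsc{measB1} and \textsc{newB}, where $\Theta_1 \bot \Theta_2$ handles lookup and removal and $\Theta_1' \bot \Theta_2$ is needed exactly once, to re-establish freshness of the new reference in \textsc{newB}.
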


\begin{lemma}[\texttt{Qoreo.Expr.step\_inversion}]
    \label{lemma:e.step_inversion}
    Suppose $\Theta_1 \bot \Theta_2$ and
    $\cfg{e}{\Theta_1,\Theta_2}{\rho} \rightarrow \cfg{e'}{\Theta'}{\rho'}$ where $\Theta_1,\Theta_2 \vdash \rho$. Suppose also that $\emptyset;\emptyset;\Theta_1 \vdash e : \tau$, i.e., $e$ only uses qubit references in $\Theta_1$, not those in $\Theta_2$. 
    Then there exists some $\Theta_1'$ disjoint from $\Theta_2$ such that
    $\cfg{e}{\Theta_1}{\rho} \rightarrow \cfg{e'}{\Theta_1'}{\rho'}$ and $\Theta' = \Theta_1',\Theta_2$.
\end{lemma}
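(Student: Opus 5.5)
We argue by induction on the derivation of $\cfg{e}{\Theta_1,\Theta_2}{\rho} \rightarrow \cfg{e'}{\Theta'}{\rho'}$, and in each case we invert the typing derivation $\emptyset;\emptyset;\Theta_1 \vdash e : \tau$. The idea is that the step relation consults the reference map in only two places: \textsc{UB} and \textsc{measB0}/\textsc{measB1}, which look up $\Theta(q)$, and \textsc{newB}, which picks a fresh $q \notin \dom{\Theta}$. In every other rule the map is threaded through unchanged or passed to a premise.

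\textbf{Pure $\beta$-rules.} For \textsc{letB}, \textsc{let!B}, \textsc{ifB}, \textsc{letPairB}, \textsc{appB} and \textsc{fixB}, the map is unchanged, so $\Theta' = \Theta_1,\Theta_2$. We take $\Theta_1' = \Theta_1$. The same rule, instantiated at $\Theta_1$, gives the required step, and disjointness is the hypothesis $\Theta_1 \bot \Theta_2$.

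\textbf{Quantum $\beta$-rules.}
\begin{itemize}
\item \textsc{newB}: we have $\Theta' = (\Theta_1,\Theta_2), q \mapsto n$ with $q$ fresh for $\Theta_1,\Theta_2$. We take $\Theta_1' = \Theta_1, q\mapsto n$. Freshness for $\Theta_2$ gives $\Theta_1' \bot \Theta_2$, and freshness for $\Theta_1$ lets \textsc{newB} fire from $\Theta_1$. Equality of the maps reduces to commuting the disjoint extension past $\Theta_2$. Finite maps are compared extensionally under the convention of \cref{def:mapping}; if the Rocq representation is not extensional, we will need a permutation lemma here.
\item \textsc{UB}: inverting \textsc{unitary}, $\otimes$ and \textsc{qref} shows that every $q_i$ lies in $\dom{\Theta_1}$. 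Because $\Theta_1 \bot \Theta_2$, concatenation gives $(\Theta_1,\Theta_2)(q_i) = \Theta_1(q_i)$, so the same indices $j_i$ and the same $\rho'$ arise. We take $\Theta_1' = \Theta_1$.
\item \textsc{measB}: typing forces $\Theta_1 = q$ (a singleton). The lookup agrees as in the \textsc{UB} case. We take $\Theta_1' = \Theta_1 \setminus \{q\}$ and must show $(\Theta_1,\Theta_2)\setminus\{q\} = (\Theta_1\setminus\{q\}),\Theta_2$, which holds because $q \notin \dom{\Theta_2}$.
\end{itemize}

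\textbf{Contextual rules.} These are where the induction hypothesis is used, and we expect the main obstacle here. Take, for example, \textsc{pairC1}. Inverting $\otimes$ splits $\Theta_1 = \Theta_{11},\Theta_{12}$ with $\Theta_{11}\bot\Theta_{12}$ and $\emptyset;\emptyset;\Theta_{11} \vdash e_1$. We then apply the induction hypothesis with the small context $\Theta_{11}$ and the large frame $\Theta_{12},\Theta_2$; this requires reassociating $\Theta_1,\Theta_2$ as $\Theta_{11},(\Theta_{12},\Theta_2)$. The hypothesis yields $\Theta_{11}'$ disjoint from $\Theta_{12},\Theta_2$ with $\cfg{e_1}{\Theta_{11}}{\rho}\rightarrow\cfg{e_1'}{\Theta_{11}'}{\rho'}$ and $\Theta' = \Theta_{11}',\Theta_{12},\Theta_2$. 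We set $\Theta_1' = \Theta_{11}',\Theta_{12}$, which is disjoint from $\Theta_2$. To get the step from $\Theta_1$ we do not use the induction hypothesis again. Instead, \cref{lemma:e.step_weakening} re-adds $\Theta_{12}$ to the subterm step, using $\Theta_{11}\bot\Theta_{12}$ and $\Theta_{11}'\bot\Theta_{12}$, and the contextual rule then lifts this to the whole expression. Well-scopedness $\Theta \vdash \rho$ must also be re-established for the sub-invocation, but this is immediate since the union of the maps is unchanged.

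The remaining contextual rules (\textsc{letC}, \textsc{ifC}, \textsc{appC1/2}, \textsc{newC}, \textsc{measC}, \textsc{UC}, and the rest) follow the same pattern. The bookkeeping is therefore a handful of finite-map algebra lemmas: associativity, commutativity of disjoint concatenation, and disjointness distributing over concatenation. Once these are in place, each case is routine.
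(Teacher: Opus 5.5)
Your proof is correct. Induction on the step derivation (with $\Theta_1,\Theta_2,\tau$ generalized) works case by case: typing inversion places the qubit references of the \textsc{UB}/\textsc{measB} redexes in $\dom{\Theta_1}$, freshness handles \textsc{newB}, and in the contextual cases you re-frame through \cref{lemma:e.step_weakening}, needing only the finite-map algebra you list. The paper gives no written proof beyond the Rocq lemma name, and your argument is the natural one suggested by its pairing of \texttt{step\_weakening} with \texttt{step\_inversion}. You could skip the weakening detour by replacing the typing hypothesis with the weaker invariant that every qubit reference in $e$ lies in $\dom{\Theta_1}$. Subterms inherit this invariant, so the induction hypothesis applies with $\Theta_1$ unchanged.
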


With these lemmas, we can prove preservation of the typing relation, as well as preservation of the well-formedness relation $\Theta \vdash \rho$.

\begin{lemma}[\texttt{Qoreo.Expr.preservation}]
    Suppose $\emptyset;\emptyset;\Theta \vdash e : \tau$
    and $\cfg{e}{\Theta}{\rho} \rightarrow \cfg{e'}{\Theta'}{\rho'}$
    where $\Theta \vdash \rho$.
    Then $\emptyset;\emptyset;\Theta' \vdash e' : \tau$ and $\Theta' \vdash \rho'$.
\end{lemma}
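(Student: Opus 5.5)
We prove the statement by induction on the derivation of the step $\cfg{e}{\Theta}{\rho} \rightarrow \cfg{e'}{\Theta'}{\rho'}$, inverting the typing derivation $\emptyset;\emptyset;\Theta \vdash e : \tau$ in each case. The two conclusions, typing of $e'$ and $\Theta' \vdash \rho'$, are proved together, since the contextual cases need both from the induction hypothesis.

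\emph{Contextual cases.} Consider \textsc{letC}. Inversion gives a split $\Theta = \Theta_1,\Theta_2$ with $\emptyset;\emptyset;\Theta_1 \vdash e_1 : \tau_1$, and the continuation is typed in $\Theta_2$. The step, however, is taken in the full context $\Theta_1,\Theta_2$.
\begin{enumerate}
\item Apply \cref{lemma:e.step_inversion} to obtain $\Theta_1'$ disjoint from $\Theta_2$ with $\cfg{e_1}{\Theta_1}{\rho} \rightarrow \cfg{e_1'}{\Theta_1'}{\rho'}$ and $\Theta' = \Theta_1',\Theta_2$.
\item Since $\Theta_1,\Theta_2 \vdash \rho$ implies $\Theta_1 \vdash \rho$, the induction hypothesis applies. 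It gives $\emptyset;\emptyset;\Theta_1' \vdash e_1' : \tau_1$ and $\Theta_1' \vdash \rho'$.
\item Reassemble the typing with the \textsc{let} rule, using $\Theta_1' \bot \Theta_2$.
\end{enumerate}
For $\Theta_1',\Theta_2 \vdash \rho'$ we need an auxiliary fact: a step on a sub-context preserves well-scopedness of the untouched references $\Theta_2$. This holds because steps only append qubits (\textsc{newB}), conjugate in place (\textsc{UB}), or project without removing indices (\textsc{measB}). I would prove it as a small separate lemma by induction on the step, or fold it into the main induction as a strengthened statement: for any frame $\Theta_2$ with $\Theta_1,\Theta_2 \vdash \rho$, we get $\Theta_1',\Theta_2 \vdash \rho'$. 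The other congruence rules (\textsc{pairC1/2}, \textsc{appC1/2}, \textsc{ifC}, \textsc{let!C}, \textsc{letPairC}, \textsc{newC}, \textsc{measC}, \textsc{UC}) follow the same pattern. The single-premise rules need no inversion lemma.

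\emph{$\beta$-cases.} These use the substitution lemmas. For \textsc{letB} and \textsc{appB}, a value $v$ typed in $\Theta_1$ is substituted into a body typed in $\Delta,x:\tau$ with $\Theta_2$, and \texttt{wt\_subst} gives typing in $\Theta_1,\Theta_2 = \Theta$. For \textsc{let!B} and \textsc{fixB}, we invert the \textsc{!} rule to get $\emptyset;\emptyset;\emptyset \vdash e : \tau$ and apply \texttt{wt\_subst\_bang}. For \textsc{fixB} we also substitute the fixpoint itself, which is typed with empty contexts. \textsc{letPairB} needs two successive linear substitutions, splitting $\Theta$ three ways. \textsc{ifB} is immediate by choosing the branch typing and weakening by the empty context of the bit guard. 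In all these cases $\rho$ and $\Theta$ are unchanged, so well-scopedness is trivial.

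\emph{Quantum cases.}
\begin{itemize}
\item \textsc{newB}: typing gives $\Theta = \emptyset$, and the result $q$ is typed in $q \mapsto n$. The new index $n$ is valid in $\rho \otimes \ket{b}\bra{b}$, and the old indices stay valid.
\item \textsc{UB}: the reference tuple keeps its type and $\Theta$; $\rho'$ has the same dimension.
\item \textsc{measB}: inversion forces $\Theta = \{q\}$, so $\Theta' = \emptyset$ and $0$/$1$ is typed with the empty context. Dimension is preserved.
\end{itemize}

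The main obstacle is the contextual cases. The mismatch between the full qubit context used by the step and the partitioned contexts in the typing derivation must be managed via \cref{lemma:e.step_inversion}, together with the frame-preservation property of $\Theta \vdash \rho$. Getting the invariant general enough for the induction to go through is the delicate part, and I would strengthen the statement to quantify over the frame $\Theta_2$ from the outset.
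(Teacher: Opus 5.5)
Your case split and toolkit are the right ones: the two substitution lemmas for the $\beta$-cases, \cref{lemma:e.step_inversion} for the congruence cases, and direct calculation for \textsc{newB}, \textsc{UB} and \textsc{measB}. The problem is that the induction you chose does not support step 2 of your \textsc{letC} case. With induction on the step derivation, your only induction hypothesis concerns the premise $\cfg{e_1}{\Theta_1,\Theta_2}{\rho}\rightarrow\cfg{e_1'}{\Theta'}{\rho'}$. Using it requires $\emptyset;\emptyset;\Theta_1,\Theta_2\vdash e_1:\tau_1$, which fails whenever $\Theta_2$ is nonempty. Qubit references are linear, since \textsc{qref} and \textsc{bit} fix $\Theta$ exactly, so $e_1$ cannot be weakened by $\Theta_2$. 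The restricted step $\cfg{e_1}{\Theta_1}{\rho}\rightarrow\cfg{e_1'}{\Theta_1'}{\rho'}$ that step inversion gives you is a newly built derivation, not a subderivation, so the hypothesis says nothing about it. Your frame strengthening only affects the $\Theta'\vdash\rho'$ conclusion and leaves this mismatch in place. The same problem recurs in every two-premise congruence case.

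The fix is routine, and you have two options.
\begin{enumerate}
\item Induct on the typing derivation (or on $e$) and invert the step. The hypothesis for $e_1$ is then quantified over all steps, so it applies to the step produced by \cref{lemma:e.step_inversion}, which is exactly what that lemma is for.
\item Keep induction on steps, but build the frame into the typing half of the statement: typing in $\Theta_1$, step in $\Theta_1,\Theta_2$, conclusion $\Theta'=\Theta_1',\Theta_2$ with $\emptyset;\emptyset;\Theta_1'\vdash e':\tau$. The hypothesis then applies to the actual premise with an enlarged frame, and step inversion becomes unnecessary.
\end{enumerate}

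Separately, the well-scopedness half needs neither typing nor frames. It follows by a direct induction on the step, for three reasons. Each congruence rule passes the same $\Theta,\rho,\Theta',\rho'$ to its premise. \textsc{newB} only appends index $n$ to an $n$-qubit state. \textsc{UB} and \textsc{measB} preserve dimension.
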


Put together, these lead to the following type safety theorem:

\begin{theorem}[\texttt{Qoreo.Expr.safety}]
    \label{thm:e.safety}
    
    Let $\emptyset;\emptyset;\Theta \vdash e : \tau$ and $\Theta \vdash \rho$. If $\cfg{e}{\Theta}{\rho} \rightarrow^\ast \cfg{e'}{\Theta'}{\rho'}$, then either $e'$ is a value, or $\cfg{e'}{\Theta'}{\rho'}$ can take a step.
\end{theorem}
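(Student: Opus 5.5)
The plan is the standard syntactic safety argument: combine progress and preservation by induction on the length of the reduction sequence $\cfg{e}{\Theta}{\rho} \rightarrow^\ast \cfg{e'}{\Theta'}{\rho'}$. The key intermediate statement is an invariant that holds along every reachable configuration. Specifically, if $\emptyset;\emptyset;\Theta \vdash e : \tau$, $\Theta \vdash \rho$, and $\cfg{e}{\Theta}{\rho} \rightarrow^\ast \cfg{e'}{\Theta'}{\rho'}$, then $\emptyset;\emptyset;\Theta' \vdash e' : \tau$ and $\Theta' \vdash \rho'$.

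I will prove this invariant by induction on the derivation of $\rightarrow^\ast$, viewed as reflexive-transitive closure with steps appended on the left.
\begin{itemize}
\item In the zero-step case, $e' = e$, $\Theta' = \Theta$ and $\rho' = \rho$, so the hypotheses are exactly the conclusion.
\item In the inductive case, $\cfg{e}{\Theta}{\rho} \rightarrow \cfg{e_1}{\Theta_1}{\rho_1} \rightarrow^\ast \cfg{e'}{\Theta'}{\rho'}$. Lemma \texttt{Qoreo.Expr.preservation} applied to the first step gives $\emptyset;\emptyset;\Theta_1 \vdash e_1 : \tau$ and $\Theta_1 \vdash \rho_1$. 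The induction hypothesis then applies to the remaining sequence.
\end{itemize}
It is important that preservation also delivers well-scopedness $\Theta_1 \vdash \rho_1$, not just typing. Without it the induction hypothesis could not be applied, since the premise $\Theta \vdash \rho$ must be re-established at every step.

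Once the invariant holds at $\cfg{e'}{\Theta'}{\rho'}$, I apply Lemma \texttt{Qoreo.Expr.progress} to $\emptyset;\emptyset;\Theta' \vdash e' : \tau$ and $\Theta' \vdash \rho'$. It gives directly that either $e'$ is a value or there is some $\cfg{e''}{\Theta''}{\rho''}$ with $\cfg{e'}{\Theta'}{\rho'} \rightarrow \cfg{e''}{\Theta''}{\rho''}$, which is the statement of Theorem \ref{thm:e.safety}.

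Since progress and preservation are already available, I expect no real obstacle at this level; the substantive work sits inside preservation, namely Lemmas \ref{lemma:e.step_weakening} and \ref{lemma:e.step_inversion} for contextual rules that split $\Theta$. The only point needing care here is the direction of induction on $\rightarrow^\ast$. If the Rocq closure is defined by appending steps on the right, I would either induct in that direction, applying preservation to the last step after the induction hypothesis, or first prove the equivalence of the two formulations. Both directions work with the invariant as stated, so I would choose whichever matches the definition of $\rightarrow^\ast$.
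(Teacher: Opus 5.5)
Your proposal is correct and follows the paper's approach. The paper obtains Theorem~\ref{thm:e.safety} by combining \texttt{Qoreo.Expr.preservation}, which maintains both typing and $\Theta \vdash \rho$ along the reduction sequence, with \texttt{Qoreo.Expr.progress} at the final configuration, exactly as you describe.
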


\section{The choreographic language}
\label{sec:choreography}
In this section we define the full Qoreo choreographic language that
integrates the local quantum language from \cref{sec:quantum}. A
choreography is a sequence of instructions which either perform local
computations or communications between actors. We use $A$, $B$ to
range over actor names, which are just unique
identifiers. Instructions can be local to an actor:
$\letin{A.x}{e}{~}$ binds $x$ to the result of a quantum computation
$e$ at $A$ for use in $C$, and similarly $\letin{A.!x}{e}{~}$ binds
the result of the classical computation $e$ at $A$ to $x$, while
$\letin{A.(x_1,x_2)}{e}{~}$ allows $\otimes$-elimination at $A$.  In
addition, instructions include classical communication (send) as
$A.e \rightsquigarrow B.x$ where the result of computation $e$ at $A$
is sent to $B$ as $x$.  Quantum communication and entanglement
generation is supported by the instruction $A.x \leftrightsquigarrow
B.y$, where an entangled pair of qubits is established between $A$ and
$B$ as variables $x$ and $y$.  Entanglement generation is a service
provided by the quantum network~\cite{wehner2018quantum,azuma_quantum_2023}, so we leave its implementation up to the network providers or simulators.
\begin{align*}
    A,B &\in \mathcal{A} \tag{Actor names} \\
    I &::= A.e \rightsquigarrow B.x \mid A.x \leftrightsquigarrow B.y  \\
    &\mid \letin{A.x}{e}{~} \mid \letin{A.(x_1,x_2)}{e}{~} \mid \letin{A.!x}{e}{~} \tag{Instructions}
    \\
    C &::= 0 \mid I; C \tag{Choreographies}
\end{align*}
We note that conditional choreographies are not included in the
syntax.  They are not needed for the examples we've considered and
their omission allows us to avoid issues related to knowledge of
choice \cite{montesi2023introduction}.

In the following sections we present the operational semantics and
typing judgment for choreorgraphies. Both will rely on the introduction
of \emph{actor maps} that rely on and extend Definition \ref{def:mapping}
for choreographies. This is necessary for keeping track of which
values and/or types are localized to which actors.

\begin{definition}
\label{def:actor-mapping}
An \emph{actor map} is a map from actors $A$ to other mappings. We
write $\ol{\Theta}$ to denote actor maps to qubit reference environments $\Theta$, and
similarly $\ol{\Gamma}$ and $\ol{\Delta}$ for maps to typing contexts $\Gamma$ or $\Delta$ respectively.
The notations introduced for other mappings in Definition \ref{def:mapping} also
apply to actor maps, such as concatenation $\ol{M_1},\ol{M_2}$ and domain $\domain{\ol{M}}$.
For any mapping $M_A$, we write $A[M_A]$ for the singleton actor map $A \mapsto M_A$.
We write $A.k \in \ol{M}$ to mean that $k \in \dom{\ol{M}(A)}$.
\end{definition}

\begin{figure*}
\begin{mathpar}
    \inferrule*[right=sendC]
    {\cfg{e}{\Theta_A}{\rho} \rightarrow \cfg{e}{\Theta_A'}{\rho'}}
    {\cfg{A.e \rightsquigarrow B.x; C}{\overline{\Theta},A[\Theta_A]}{\rho} 
        \xrightarrow{A} 
        \cfg{A.e' \rightsquigarrow B.x; C}{\overline{\Theta},A[\Theta_A']}{\rho'}
    }

    \inferrule*[right=sendB]
    {~}
    {
        \cfg{A.!e \rightsquigarrow B.x; C}{\overline{\Theta}}{\rho}
        \xrightarrow{A.!e \rightsquigarrow B}
        \cfg{C[e/B.x]}{\overline{\Theta}}{\rho}
    }
    
    %
    \inferrule*[right=letB]
    {~}
    {
        \cfg{\letin{A.x}{v}{~}\! C}{\overline{\Theta}}{\rho}
        \xrightarrow{A}
        \cfg{C[v/A.x]}{\overline{\Theta}}{\rho}
    }

    \inferrule*[right=entangleB]
    {
        q_1,q_2 ~\text{fresh} \\
        n = \text{num qubits of}~\rho \\
        \overline{\Theta}' = \overline{\Theta},A[\Theta_A,q_1 \mapsto 2^n],B[\Theta_B,q_2 \mapsto 2^n+1]  \\
        \rho' = \rho \otimes \ket{\Phi^+}\bra{\Phi^+}
    }
    {
        \cfg{A.x_1 \leftrightsquigarrow B.x_2; C}{\overline{\Theta},A[\Theta_A],B[\Theta_B] }{\rho}
        \xrightarrow{A \leftrightsquigarrow B}
        \cfg{C[q_1/A.x_1][q_2/B.x_2]}{\overline{\Theta}'}{\rho'}
    }

    \inferrule*[right=delay]
    {
        \cfg{C}{\overline{\Theta}}{\rho} \xrightarrow{\ell} \cfg{C'}{\overline{\Theta}'}{\rho'} \\
        \Actors{I} \cap \Actors{\ell} = \emptyset
    }
    {
        \cfg{I; C}{\overline{\Theta}}{\rho}
        \xrightarrow{\ell}
        \cfg{I; C'}{\overline{\Theta}'}{\rho'}
    }
\end{mathpar}
    \caption{Select operational semantics for choreographies. In the \textsc{send-B} rule, the value being sent must be of the form $!e$ to limit classical communication to only transmitting classical data, never quantum data like qubits. In the \textsc{entangleB} rule, the quantum state is extended with a Bell state of the form $\frac{1}{2}(\ket{00}\bra{00} + \ket{11}\bra{11})$, and fresh qubit references are substituted for $A$'s and $B$'s variables respectively.}
    \label{fig:choreo-step}
\end{figure*}

\subsection{Operational Semantics} 

The operational semantics extends configurations to $\cfg{C}{\overline{\Theta}}{\rho}$,
where $\ol{\Theta}$ tracks qubits allocated to each actor in the choreography. Defining
the step relation for these configurations requires some preliminary details. 
To begin with, we need to be careful about substitutions in choreographies to
ensure they respect process locality, and to ensure that inter-process communication
can only occur via explicit communication instructions. Thus we define process-respecting
substitutions as follows. 
\begin{definition}
  Define $B.e_1[e_2/A.y] = e_1[e_2/y]$ if $A = B$ and $e_1$ otherwise. Then we define
  \emph{process respecting substitution} in instructions, written $I[e/A.y]$, as follows:
    \begin{mathpar}
        (B_1.e' \rightsquigarrow B_2.x)[e/A.y] \defeq (B_1.(B_1.e')[e/A.y] \rightsquigarrow B_2.x)

        (B_1.x \leftrightsquigarrow B_2.y)[e/A.y] \defeq (B_1.x \leftrightsquigarrow B_2.y)     
     
        (\letin{B.x}{e'}{}\!)[e/A.y] \defeq (\letin{B.x}{(B.e')[e/A.y] }{}\!)

        (\letin{B.(x_1,x_2)}{e'}{}\!)[e/A.y] \defeq (\letin{B.(x_1,x_2)}{(B.e')[e/A.y]}{}\!)

        (\letin{B.!x}{e'}{}\!)[e/A.y] \defeq (\letin{B.!x}{(B.e')[e/A.y]}{}\!)             
  \end{mathpar}                        
  Then, writing $\mathit{rebound}(A.y,I)$ to mean that a binding in $I$ shadows
  $A.y$ (e.g., if $I$ is $(\letin{A.y}{e'}{~}\!)$), we inductively define choreographic
  substitution, written $C[e/A.x]$, as follows:
  \begin{mathpar}
        0[e/A.x] = 0

        \inferrule
        {\mathit{rebound}(A.y,I)}
        {(I;C)[e/A.x] = (I[e/A.x];C)}
        
        \inferrule
        {\neg\mathit{rebound}(A.y,I)}
        {(I;C)[e/A.x] = (I[e/A.x];C[e/A.x])}
  \end{mathpar}
\end{definition}

The choreographic step relation is also labeled by the type of action
performed and the location(s) at which the action has taken place:
either a local computation at an actor $A$; a classical communication
between $A$ and $B$ (written $A.v \rightsquigarrow B$), or a classical
communication between $A$ and $B$ (written $A \leftrightsquigarrow
B$).
\begin{align*}
    \ell &::= A \mid A.v \rightsquigarrow B \mid A \leftrightsquigarrow B \tag{labels}
\end{align*}
Note that we consider quantum communication to be symmetric, so the
label $A \leftrightsquigarrow B$ is equivalent to
$B \leftrightsquigarrow A$. We write $\Actors{\ell}$ to denote the actors
occuring in a label, and $\Actors{I}$ to denote the actors occuring in an
instruction. The reason for these labels will become more
clear when we discuss the \textsc{delay} rule that allows computational
steps within the scope of other instructions, as long as $\Actors{\ell}$--
identifying the processes involved in the steps-- are disjoint from
$\Actors{I}$-- identifying the processes whose instructions are ``delayed''.

The step relation then has the form $\cfg{C}{\ol{\Theta}}{\rho} \xrightarrow{\ell}
\cfg{C'}{\ol{\Theta}'}{\rho'}$ and is defined in \Cref{fig:choreo-step}. As we did for
the local step rules, we define contextual $\textsc{C}$ forms and
computational $\textsc{B}$ forms for each instruction. 
Note that contextual rules, where a nested expression reduces in the
actor $A$, are given the label $A$ corresponding to local computation.
The \textsc{delay} rule allows out-of-order execution as long as the
actors involved in the step don't appear in the delayed instruction.

In the \textsc{entangleB} rule, we assume the current quantum state has $n$ qubits. We initialize two fresh qubits in the entangled Bell state
$\ket{\Phi^+} = \frac{1}{\sqrt{2}}(\ket{00}\bra{00}+\ket{11}\bra{11})$ at positions $(n+1)$ and $(n+2)$ in the state respectively. Then, for fresh qubit references $q_1$ and $q_2$, we add the mappings $A.q_1 \mapsto (n+1)$ and $B.q_2 \mapsto (n+2)$ to the qubit mapping references, substituting $q_1$ for $A.x_1$ and $q_2$ for $B.x_2$ in the rest of the choreography.

The multi-step reduction relation $\rightarrow^\ast$ is defined as the
reflexive, transitive closure of $\rightarrow$, ignoring labels. That is:
\begin{mathpar}
  \inferrule*
  {~}
  {\cfg{C}{\overline{\Theta}}{\rho}  \rightarrow^\ast \cfg{C}{\overline{\Theta}}{\rho}}

  \inferrule
  {
    \cfg{C_1}{\overline{\Theta}_1}{\rho_1} \xrightarrow{\ell} \cfg{C_2}{\overline{\Theta}_2}{\rho_2} \\
    \cfg{C_2}{\overline{\Theta}_2}{\rho_2} \rightarrow^\ast \cfg{C_3}{\overline{\Theta}_3}{\rho_3}
  }                                       
  {\cfg{C_1}{\overline{\Theta}_1}{\rho_1} \rightarrow^\ast \cfg{C_3}{\overline{\Theta}_3}{\rho_3}}
\end{mathpar}

\subsection{Typing relation}

The typing judgment for choreographies has the form
$\overline{\Gamma};\overline{\Delta};\overline{\Theta} \vdash C$. The
actor mappings $\overline{\Gamma}$, $\overline{\Delta}$, and
$\overline{\Theta}$ maintain classical, linear, and qubit reference
environments respectively for each process (identified by actor).  We
show selected type derivation rules in \cref{fig:choreography-types}
for the \textsc{send}, \textsc{entangle}, and \textsc{let} cases. We note
especially for the \textsc{entangle} and \textsc{let} cases that linearity
is enforced by disjointness conditions on type environments.  This
ensures no cloning, and no re-use of qubits. The \textsc{send} case
ensures that only results of classical computations are communicated
classically, and that qubits are communicated via entanglement generation.

\begin{figure*}
    \centering
\begin{mathpar}
    \inferrule*[right=send]
    {
        \ol{\Gamma}(A);\Delta_1;\Theta_1 \vdash e : !\tau \\
        \overline{\Gamma},B.x:\tau; \overline{\Delta}, A[\Delta_2]; \overline{\Theta},A[\Theta_2]  \vdash C\\
        \Delta_1 \bot \Delta_2 \\
        \Theta_1 \bot \Theta_2
    }
    {
        \overline{\Gamma} ; \left(\overline{\Delta}, A[\Delta_1,\Delta_2]\right);
        \left(\overline{\Theta}, A[\Theta_1,\Theta_2]\right)
        \vdash A.e \rightsquigarrow B.x; C
    }
    
    \inferrule*[right=entangle]
    {
        \overline{\Gamma}; \overline{\Delta},A.x : \qubit, B.y : \qubit; \overline{\Theta} \vdash C\\
        A.x \not\in \ol{\Gamma},\ol{\Delta} \\
        B.y \not\in \ol{\Gamma},\ol{\Delta}
    }
    {\overline{\Gamma};\overline{\Delta};\overline{\Theta} \vdash A.x \leftrightsquigarrow B.y; C}
    
    \inferrule*[right=let]
    {
        \ol{\Gamma}(A);\Delta_1;\Theta_1 \vdash e : \tau \\
        \overline{\Gamma}; \overline{\Delta}, A[\Delta_2,x:\tau]; \overline{\Theta}, A[\Theta_2] \vdash C\\
        \Delta_1 \bot \Delta_2 \quad
        \Theta_1 \bot \Theta_2 \quad
        A.x \not\in \ol{\Gamma},\ol{\Delta} \\   
    }
    {
        \overline{\Gamma};
        \left(\overline{\Delta}, A[\Delta_1,\Delta_2]\right);
        \left(\overline{\Theta}, A[\Theta_1,\Theta_2]\right)
        \vdash \letin{A.x}{e}{C}
    }
\end{mathpar}
    \caption{Select typing rules for the choreographic language.}
    \label{fig:choreography-types}
\end{figure*}

\subsection{Meta-theory}

Now we extend our type safety results to the choreographic language. Since evaluation of 
choreographies incorporates evaluation of local computations, type safety for choreographies 
requires type safety results for local computations. Type preservation for top-level
choreographies can be stated as follows, where $\overline{\Theta} \vdash \rho$ means
$\overline{\Theta}(A) \vdash \rho$ for all $A \in \dom{\ol{\Theta}}$.
\begin{lemma}
    \label{lemma:C.preservation-hl}
    If $\emptyset;\emptyset;\overline{\Theta} \vdash C$ and
    $\cfg{C}{\overline{\Theta}}{\rho} \xrightarrow{\ell} \cfg{C'}{\overline{\Theta}'}{\rho'}$ where
    $\overline{\Theta} \vdash \rho$, then $\emptyset;\emptyset;\overline{\Theta}' \vdash C'$ and
    $\overline{\Theta}' \vdash \rho'$.
\end{lemma}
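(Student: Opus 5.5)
We prove a generalized statement by induction on the derivation of the step $\cfg{C}{\overline{\Theta}}{\rho} \xrightarrow{\ell} \cfg{C'}{\overline{\Theta}'}{\rho'}$. The top-level statement has empty $\ol{\Gamma}$ and $\ol{\Delta}$, but the \textsc{delay} rule steps underneath an instruction $I$ that may have bound variables. So we strengthen the claim.

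Assume $\ol{\Gamma};\ol{\Delta};\ol{\Theta}_0,\ol{\Theta} \vdash C$, where $\ol{\Theta}$ contains exactly the qubit references touched by the step and $\ol{\Theta}_0$ is a disjoint frame. Then $\ol{\Gamma};\ol{\Delta};\ol{\Theta}_0,\ol{\Theta}' \vdash C'$. The open contexts only need to be well formed at actors not involved in $\ell$.

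This generalization requires two choreographic substitution lemmas. They are stated for the process-respecting substitution $C[e/A.x]$ and proved by induction on $C$, dispatching each instruction to \texttt{Qoreo.Expr.wt\_subst}. The first is a linear version: if $\emptyset;\emptyset;\Theta_1 \vdash e:\tau$ and $\ol{\Gamma};\ol{\Delta},A.x:\tau;\ol{\Theta},A[\Theta_2] \vdash C$, then $\ol{\Gamma};\ol{\Delta};\ol{\Theta},A[\Theta_1,\Theta_2]\vdash C[e/A.x]$. The second is a non-linear version using \texttt{wt\_subst\_bang}. In the rebound case only $I$ is substituted. The side conditions $A.x\notin\ol{\Gamma},\ol{\Delta}$ in the typing rules guarantee that the substituted variable is not free in the tail.

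The cases of the step derivation are as follows.

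\begin{itemize}
\item \textbf{Contextual rules} (\textsc{sendC}, \textsc{letC}, and analogues). Invert the typing derivation to split $A$'s qubit context as $\Theta_1,\Theta_2$ with $\emptyset;\emptyset;\Theta_1\vdash e:\tau$ (using $\ol{\Gamma}(A)$, which for closed top-level terms is empty). Apply \cref{lemma:e.step_inversion} to restrict the local step to $\Theta_1$, yielding $\Theta_1'$ disjoint from $\Theta_2$ and from the other actors' references. Then apply \texttt{Qoreo.Expr.preservation} to get $\emptyset;\emptyset;\Theta_1'\vdash e':\tau$ and well-scopedness of $\rho'$. Reassemble with the same rule.

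For well-scopedness at other actors $B$, note that measurement only removes references and \textsc{newB} only appends indices. So $\ol{\Theta}(B)\vdash\rho'$ holds because $\rho'$ has at least as many qubits as $\rho$. This needs a small auxiliary lemma that local steps never shrink the dimension of $\rho$.

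\item \textbf{\textsc{letB}, \textsc{sendB}, and the pair and bang lets.} Use canonical forms to see that a value of type $!\tau$ is $!e$ with empty qubit context; this is why \textsc{sendB} transmits nothing linear. Then apply the substitution lemmas.

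\item \textbf{\textsc{entangleB}.} Extend the qubit contexts with the fresh $q_1,q_2$. Apply the linear substitution lemma twice, replacing the $\qubit$-typed $A.x_1,B.x_2$ by the references. Well-scopedness holds since the new indices lie within $\rho\otimes\ket{\Phi^+}\bra{\Phi^+}$ and old indices are preserved.

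\item \textbf{\textsc{delay}.} Invert the typing of $I;C$ to obtain a typing of $C$ in extended contexts. These extensions live only at $\Actors{I}$, which is disjoint from $\Actors{\ell}$. Apply the induction hypothesis, and rebuild $I$'s typing unchanged.
\end{itemize}

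The main obstacle is the \textsc{delay} case. The inner step happens in a context where $\ol{\Gamma},\ol{\Delta}$ are nonempty at actors in $\Actors{I}$, and we must ensure that the step's change to $\ol{\Theta}$, which lives only at $\Actors{\ell}$, commutes with the context split used by $I$'s rule. I would handle it with a frame lemma for choreographic steps: a step labeled $\ell$ changes $\ol{\Theta}$ only at $\Actors{\ell}$ and never inspects $\ol{\Gamma},\ol{\Delta}$ elsewhere. This is the choreographic analogue of \cref{lemma:e.step_weakening} and \cref{lemma:e.step_inversion}. I would prove the frame lemma first by induction on the step and then use it to discharge \textsc{delay}.
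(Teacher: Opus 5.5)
Your proposal is correct and follows the paper's route: the paper obtains this lemma as a special case of \texttt{Qoreo.Choreography.preservation}, where $\ol{\Gamma},\ol{\Delta}$ may be non-empty except at actors in $\Actors{\ell}$, and proves that with the linear and classical choreographic substitution lemmas plus choreographic analogues of \cref{lemma:e.step_weakening} and \cref{lemma:e.step_inversion} for the \textsc{delay} case, just as you plan. The one thing to tighten is your strengthened hypothesis: state it as $\ol{\Gamma}(A)=\ol{\Delta}(A)=\emptyset$ for all $A\in\Actors{\ell}$ (this, rather than top-level closedness, is what your contextual, \textsc{letB} and \textsc{sendB} cases use, and your \textsc{delay} argument preserves it), after which the extra $\ol{\Theta}_0$ frame in your statement is unnecessary, since your frame lemma already resolves the $\ol{\Theta}$ mismatch at $\Actors{I}$.
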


A main challenge in proving choreographic type safety is
the \textsc{delay} rule that allows evaluation of "subchoreographies"
of a choreography $C$. This means that type preservation in particular
needs to be defined with respect to choreographies with free
variables for actors not involved with the reduction step. The
latter can always be determined from the label on the reduction step.
Thus we demonstrate the following lemma, from which lemma
\label{lemma:C.preservation-hl} follows as a special case.
This statement allows an induction on typing derivations
and inversion on the reduction step.
\begin{lemma}[\texttt{Qoreo.Choreography.preservation}]
    \label{lemma:C.preservation}
    If $\overline{\Gamma};\overline{\Delta};\overline{\Theta} \vdash C$ where
    $\overline{\Gamma}(A) = \emptyset$ and $\overline{\Delta}(A) = \emptyset$
    for all $A \in  \Actors{\ell}$ and 
    $\cfg{C}{\overline{\Theta}}{\rho} \xrightarrow{\ell} \cfg{C'}{\overline{\Theta}'}{\rho'}$ where
    $\overline{\Theta} \vdash \rho$, then
    $\overline{\Gamma};\overline{\Delta};\overline{\Theta}' \vdash C'$ and
    $\overline{\Theta}' \vdash \rho'$.
\end{lemma}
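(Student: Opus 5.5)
We will prove the statement by induction on the derivation of the step $\cfg{C}{\overline{\Theta}}{\rho} \xrightarrow{\ell} \cfg{C'}{\overline{\Theta}'}{\rho'}$, inverting the typing derivation of $C$ in each case. The hypothesis that $\overline{\Gamma}(A)$ and $\overline{\Delta}(A)$ are empty for every $A \in \Actors{\ell}$ is exactly what lets us appeal to the closed-expression results of \cref{sec:quantum}.

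\textbf{Contextual rules.} Consider \textsc{sendC}, whose label is $A$. Inverting \textsc{send} gives $\ol{\Gamma}(A);\Delta_1;\Theta_1 \vdash e : !\tau$. By hypothesis $\ol{\Gamma}(A) = \emptyset$, and $\Delta_1 \subseteq \ol{\Delta}(A) = \emptyset$, so this judgment is $\emptyset;\emptyset;\Theta_1 \vdash e : !\tau$. The local step happens under $\Theta_A = \Theta_1,\Theta_2$, where $\Theta_2$ holds references used by the continuation.
\begin{enumerate}
\item \cref{lemma:e.step_inversion} restricts the step to $\Theta_1$, yielding $\Theta_1'$ with $\Theta_1' \bot \Theta_2$ and $\Theta_A' = \Theta_1',\Theta_2$.
\item Local preservation (\texttt{Qoreo.Expr.preservation}) gives $\emptyset;\emptyset;\Theta_1' \vdash e' : !\tau$.
\item Well-scopedness of $\rho'$ with respect to $\Theta_A'$ comes from the same lemma applied under the full context, possibly using \cref{lemma:e.step_weakening}.
\end{enumerate}
We then rebuild the \textsc{send} derivation with $\Theta_1'$ in place of $\Theta_1$. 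For the other actors $B \neq A$, we need $\ol{\Theta}(B) \vdash \rho'$. This should follow from a small auxiliary lemma: local steps only extend $\rho$ (via \textsc{newB}) or act in place (unitaries and projections), so valid indices stay valid. The remaining contextual rules for \textsc{let}, \textsc{let}-$\otimes$ and \textsc{let}-$!$ are handled identically.

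\textbf{$\beta$ rules.} For \textsc{letB} (and the pair and $!$ variants), we need a choreographic substitution lemma: if $\emptyset;\emptyset;\Theta_1 \vdash v : \tau$ and $\ol{\Gamma};\ol{\Delta},A[\Delta_2,x:\tau];\ol{\Theta},A[\Theta_2] \vdash C$, then $\ol{\Gamma};\ol{\Delta},A[\Delta_2];\ol{\Theta},A[\Theta_1,\Theta_2] \vdash C[v/A.x]$. We prove it by induction on $C$, using \texttt{Qoreo.Expr.wt\_subst} inside each instruction at actor $A$. At each instruction, the linear resource $x$ (and $\Theta_1$) goes to whichever premise uses $x$, and the \textit{rebound} case is handled by the disjointness side conditions. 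For \textsc{sendB} we need the analogous nonlinear lemma for $C[e/B.x]$, where $B.x:\tau \in \ol{\Gamma}$, built on \texttt{wt\_subst\_bang}; here $e$ is closed with empty $\Theta$ because $!e$ is typed with empty linear contexts. For \textsc{entangleB}, we substitute the fresh references $q_1,q_2$ for the linear qubit variables $A.x_1,B.x_2$, using the linear lemma with $\Theta_1 = q_1$ (respectively $q_2$). Well-scopedness holds because $\rho \otimes \ket{\Phi^+}\bra{\Phi^+}$ has the two new indices and keeps all old ones.

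\textbf{Delay.} This is where the generalized statement is essential. We have $C = I;C_0$, a step $C_0 \xrightarrow{\ell} C_0'$, and $\Actors{I} \cap \Actors{\ell} = \emptyset$. Inverting the typing of $I;C_0$ yields a derivation for $C_0$ whose contexts are extended only at actors of $I$; for example, $B.x:\tau$ or $A.x:\tau$ is added, and $\ol{\Theta}$ is split at $A$. These extended contexts are therefore still empty at the actors of $\ell$, so the induction hypothesis applies. It returns a typing of $C_0'$ under the updated $\ol{\Theta}'$, and we reapply the typing rule for $I$.

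The main obstacle is bookkeeping. We must show that $\ol{\Theta}'$ differs from $\ol{\Theta}$ only at $\Actors{\ell}$, or only by additions there, so that the split of $\ol{\Theta}(A)$ that $I$ requires for $A \in \Actors{I}$ is preserved. We will establish this as a separate lemma first, by induction on the step, and invoke it in the \textsc{delay} case. The substitution lemmas are the next most laborious part, mainly because of shadowing.
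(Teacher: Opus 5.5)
\textbf{Overall.} Your case structure matches the paper's. The $\beta$-cases use choreographic linear and classical substitution lemmas, and the contextual cases use local step inversion plus local preservation. Inducting on the step rather than on typing makes no difference here.

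\textbf{The gap: the \textsc{delay} case.} The problem is at ``so the induction hypothesis applies.'' Take $I = \letin{A.x}{e}{}$; send, let-$!$ and let-pair behave the same way, and only entangle leaves $\ol{\Theta}$ unsplit. Inverting typing gives a derivation of $C_0$ under $\ol{\Theta}_0, A[\Theta_2]$. But the step you are inducting on is $\cfg{C_0}{\ol{\Theta}_0, A[\Theta_1,\Theta_2]}{\rho} \xrightarrow{\ell} \cfg{C_0'}{\ol{\Theta}'}{\rho'}$. The induction hypothesis only concerns a typing under the same qubit map the step uses. You also cannot enlarge the typing's map to $\Theta_1,\Theta_2$, because qubit references admit no weakening. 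So the induction hypothesis does not apply as stated.

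\textbf{What is missing.} You need a choreographic step-inversion lemma. Suppose $A \notin \Actors{\ell}$ and $\cfg{C_0}{\ol{\Theta}_0, A[\Theta_1,\Theta_2]}{\rho} \xrightarrow{\ell} \cfg{C_0'}{\ol{\Theta}'}{\rho'}$. Then $\cfg{C_0}{\ol{\Theta}_0, A[\Theta_2]}{\rho} \xrightarrow{\ell} \cfg{C_0'}{\ol{\Theta}''}{\rho'}$ for some $\ol{\Theta}''$ with $\ol{\Theta}''(A) = \Theta_2$, and $\ol{\Theta}'$ is $\ol{\Theta}''$ with $A$'s entry re-extended by $\Theta_1$. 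The lemma holds because no rule inspects the qubit map of an actor outside the label. Shrinking that map only makes the freshness condition of \textsc{entangleB} easier to satisfy. Its precondition $\ol{\Theta}_0, A[\Theta_2] \vdash \rho$ is immediate.

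With this lemma, apply the induction hypothesis to the restricted step. Then rebuild the typing rule for $I$ using $\Theta_1 \bot \Theta_2$ and $\Theta_1 \vdash \rho'$, the latter from your index-monotonicity lemma. The lemma you plan, that $\ol{\Theta}'$ differs from $\ol{\Theta}$ only at $\Actors{\ell}$, covers only this reassembly half, not the restriction. This is exactly why the paper proves choreographic analogues of step weakening and step inversion and uses them mainly in the delay cases.

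\textbf{A smaller point in \textsc{sendC}.} Local preservation cannot be applied ``under the full context'' $\Theta_1,\Theta_2$, since $e$ is not typable there. Instead, get $\Theta_1' \vdash \rho'$ from preservation of the restricted local step. Get $\Theta_2 \vdash \rho'$ from the same monotonicity lemma you already use for the other actors.
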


The proof of lemma \ref{lemma:C.preservation} proceeds in a ``traditional manner''
in that it incorporates substitution lemmas to apply in cases where reduction
involves substitutions. Since substitutions can be either classical,
as in \textsc{sendB}, or linear, as in \textsc{entangleB}, we need different
lemmas for classical and linear substitution. Classical substitution is
stated as follows. 
\begin{lemma}[\texttt{Qoreo.Choreography.wt\_subst\_bang}]
    \label{lemma:C.wt_subst_bang}
    $\overline{\Gamma},A.x : \tau;\overline{\Delta};\overline{\Theta} \vdash C$ and
    $\emptyset,\emptyset,\emptyset \vdash e : \tau$ then
    $\overline{\Gamma};\overline{\Delta};\overline{\Theta} \vdash C[e/A.x]$.
\end{lemma}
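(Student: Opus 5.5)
The proof will go by induction on the derivation of $\overline{\Gamma},A.x : \tau;\overline{\Delta};\overline{\Theta} \vdash C$, generalizing over $\overline{\Gamma}$, $\overline{\Delta}$ and $\overline{\Theta}$. The inductive hypothesis must hold for whatever contexts the continuation is typed in, since these contexts are extended at binders and split at local subexpressions. The base case $C = 0$ is immediate because $0[e/A.x] = 0$. In each case $C = I; C'$, I will proceed in two steps.

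First I handle the local expression inside $I$, if there is one. It occurs at some actor $B$ and is typed by a local judgment of the form $\ol{\Gamma}(B);\Delta_1;\Theta_1 \vdash e' : \tau'$, where $\ol{\Gamma}(B)$ here means the extended environment.
\begin{itemize}
\item If $B \neq A$, process-respecting substitution leaves $e'$ unchanged. The extension by $A.x$ does not touch $\ol{\Gamma}(B)$, so the local judgment is reused verbatim.
\item If $B = A$, the local context is $\ol{\Gamma}(A),x:\tau$. The substituted expression is $e'[e/x]$, and it is typed directly by \texttt{Qoreo.Expr.wt\_subst\_bang} using $\emptyset;\emptyset;\emptyset \vdash e : \tau$.
\end{itemize}
The hypothesis on $e$ has empty linear and qubit contexts. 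Hence the splits $\Delta_1,\Delta_2$ and $\Theta_1,\Theta_2$ in the rule are unchanged, and all disjointness side conditions carry over.

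Second I handle the continuation $C'$, splitting on whether $\mathit{rebound}(A.x,I)$ holds.
\begin{itemize}
\item If $I$ does not rebind $A.x$, the substitution pushes into $C'$. I apply the inductive hypothesis to the premise for $C'$, whose classical context still contains $A.x:\tau$.
\item One subtlety arises for \textsc{send} with target $B.y$, where the premise context is $\overline{\Gamma},A.x:\tau,B.y:\tau'$. I need a small permutation/exchange lemma on actor maps to move $A.x:\tau$ to the end before invoking the hypothesis. This is valid when $B.y \neq A.x$, that is, when the send does not rebind.
\item If $I$ does rebind $A.x$, then $(I;C')[e/A.x] = I[e/A.x];C'$. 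Here I must show that $C'$ is typable without the shadowed binding. Three sub-cases arise:
  \begin{itemize}
  \item \textsc{send} to $A.x$ and \textsc{let-}$!$: the later binding overwrites the earlier one under the concatenation convention of Definition~\ref{def:mapping}. So $\ol{\Gamma},A.x:\tau,A.x:\tau' = \ol{\Gamma},A.x:\tau'$ as maps, and the premise can be reused unchanged.
  \item Linear rebinding (\textsc{let}, \textsc{let-}$\otimes$, \textsc{entangle}): the side conditions $A.x \notin \ol{\Gamma},\ol{\Delta}$ forbid $A.x$ already being in $\ol{\Gamma}$. So such a derivation can only exist when $A.x$ is not rebound this way, and the case is vacuous. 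Alternatively, a strengthening lemma removes the unused classical binding.
  \end{itemize}
\end{itemize}

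I expect the main obstacle to be this context bookkeeping rather than any conceptual step. It consists of exchange and shadowing for actor maps, strengthening for the unused classical hypothesis $A.x$, and making sure the formal definition of $\mathit{rebound}$ matches the typing side conditions. My plan is to first prove auxiliary lemmas: exchange and shadowing for $\ol{\Gamma}$, and weakening/strengthening of $\ol{\Gamma}$ for variables not free in $C$. Once these are in place, each case reduces to rewriting contexts and applying the rule.
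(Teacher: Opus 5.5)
Your proposal is correct and follows the same route as the paper's mechanized argument for \texttt{Qoreo.Choreography.wt\_subst\_bang}. That argument is induction on the choreographic typing derivation, generalized over the contexts: the local expression at actor $A$ is discharged with \texttt{Qoreo.Expr.wt\_subst\_bang}, and the continuation is handled by the inductive hypothesis together with exchange and shadowing bookkeeping on $\ol{\Gamma}$. Your treatment of the $\mathit{rebound}$ clause is also right. Linear rebinding of $A.x$ is excluded by the freshness side conditions, so only the classical rebinding cases (\textsc{send} and \textsc{let-}$!$) need the shadowing argument.
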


Linear substitution-- that is, a substitution $C[v/A.x]$ when $v$ is a
linear value-- provides insights into type system features. In
contrast to classical substitution, it imposes linearity requirements
on $A.x$, namely that it only occurs once in the linear
environment. The variable $A.x$ should also not occur in the classical
environment to avoid conflicts in the typing of $A.x$. Also, the Qref
ids in $v$ should not overlap with the Qref ids in $C$-- this is
important to disallow cloning.
\begin{lemma}[\texttt{Qoreo.Choreography.wt\_subst\_lin}]
    \label{lemma:C.wt_subst_lin}
    $\overline{\Gamma};\overline{\Delta},A.x : \tau;\overline{\Theta},A[\Theta_2] \vdash C$ and
    $\emptyset,\emptyset,\Theta_1 \vdash e : \tau$
    and $\Theta_1\bot\Theta_2$ and $A.x\not\in \ol{\Gamma},\ol{\Delta}$, then
    $\overline{\Gamma};\overline{\Delta};\overline{\Theta},A[\Theta_1,\Theta_2] \vdash C[e/A.x]$.
\end{lemma}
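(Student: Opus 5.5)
The plan is to proceed by induction on the derivation of $\overline{\Gamma};\overline{\Delta},A.x : \tau;\overline{\Theta},A[\Theta_2] \vdash C$, with the statement generalized over $\overline{\Gamma}$, $\overline{\Delta}$, $\overline{\Theta}$ and $\Theta_2$ so that the induction hypothesis can be applied to the premises of each rule, where the contexts are extended or split. In every case the last instruction $I$ of $C = I; C_0$ either belongs to actor $A$ (it computes at $A$ or sends from $A$) or does not. The argument rests on two facts. If $I$ does not mention $A$, then $I[e/A.x] = I$, and the linear assumption $A.x:\tau$ together with $A[\Theta_2]$ is threaded unchanged to the premise for $C_0$, so the induction hypothesis closes the case. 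If $I$ does mention $A$, the typing rule (\textsc{let}, \textsc{send}, and the $\otimes$ and $!$ variants) splits $\overline{\Delta}(A),x:\tau$ into $\Delta_1$ for the local expression $e'$ and $\Delta_2$ for the continuation. It likewise splits $\Theta_2$ into $\Theta_{21}$ and $\Theta_{22}$. By disjointness, $x$ lies in exactly one of $\Delta_1$ or $\Delta_2$.

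In the subcase $x \in \Delta_1$, the substitution acts on the local expression. Here I apply \texttt{Qoreo.Expr.wt\_subst}, using $\Theta_1 \bot \Theta_{21}$ (which follows from $\Theta_1\bot\Theta_2$), to get $\overline{\Gamma}(A);\Delta_1\setminus x;\Theta_1,\Theta_{21} \vdash e'[e/x]$. I then need $x$ not to occur free in $C_0$. This follows because $x \notin \Delta_2$, and $A.x \notin \overline{\Gamma}$ ensures $x$ is not classical at $A$ either; a small free-variable lemma (well-typed choreographies have free variables in their contexts) then shows that $C_0[e/A.x]=C_0$. I reassemble with the same rule, moving $\Theta_1$ into the local expression's component. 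In the subcase $x\in\Delta_2$, $e'$ is untouched, and the induction hypothesis applies to $C_0$ with $\Theta_{22}$ and $\Theta_1$. Unless $I$ rebinds $A.x$, in which case there is nothing to substitute, the binder side conditions $A.y \notin \overline{\Gamma},\overline{\Delta}$ guarantee that the bound variable differs from $x$ and that $C[e/A.x]$ descends into $C_0$. For \textsc{entangle}, $A.x$ simply remains in $\overline{\Delta}$ together with the new qubit variables, and the induction hypothesis applies. For \textsc{send}, the variable $B.y$ is added to $\overline{\Gamma}$; if $B=A$ and $y=x$, this contradicts well-formedness of the extended context (I expect the mechanization to need a freshness side condition here, or shadowing via $\mathit{rebound}$). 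For the empty choreography $0$, the linear context must be empty, which contradicts $A.x:\tau$, so this case is vacuous or immediate.

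I expect the main obstacle to be the bookkeeping of actor maps under concatenation. Each time a rule splits $\overline{\Delta},A[\Delta_1,\Delta_2]$, I must show that the hypothesis context $\overline{\Delta},A.x:\tau$ decomposes compatibly and that the reassembled $\overline{\Theta},A[\Theta_1,\Theta_{21},\Theta_{22}]$ equals $\overline{\Theta},A[\Theta_1,\Theta_2]$ up to the map equality of Definition~\ref{def:mapping}. Since concatenation is right-biased, I would first prove permutation and associativity lemmas for disjoint maps and a lemma stating that typing is invariant under extensional equality of maps, and only then carry out the induction. The free-variable lemma used in the $x\in\Delta_1$ subcase is the other auxiliary result I would establish beforehand.
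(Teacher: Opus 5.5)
Your proposal is correct and follows the standard route the paper indicates for its mechanized ("traditional") substitution lemmas. You induct on the typing derivation and split on whether the head instruction computes at $A$ and whether $x$ falls in the expression's or the continuation's linear context, closing these cases with \texttt{Qoreo.Expr.wt\_subst} and the induction hypothesis respectively, plus the map-reassociation bookkeeping you anticipate. One point to tighten: in the $x\in\Delta_1$ subcase of a binder that reuses $x$ (e.g.\ $\letin{A.x}{e'}{C_0}$ with $x$ consumed in $e'$), $x$ may be free in $C_0$, so $C_0$ is left untouched because of the rebound clause of substitution rather than your free-variable lemma, while $e'$ is still substituted---your ``nothing to substitute'' holds only for $C_0$.
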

In addition to these lemmas, we also needed establish choreographic analogues of the
quantum step weakening and inversion lemmas (\ref{lemma:e.step_weakening} and
\ref{lemma:e.step_inversion}). These are mainly used in the delay cases,
when the continuation choreography steps before local computation in the
current instruction. 

Progress is stated as follows. It shows the type system disallows bad things from
happening-- cloning qubits are prevented by linearity, only classical values
are bound to classical variables, tensors are deconstructed and used linearly, etc. 
\begin{lemma}[\texttt{Qoreo.Choreography.progress}]
    \label{lemma:C.progress}
    Suppose $\emptyset;\emptyset;\Theta \vdash C$ and $\Theta \vdash \rho$. Then either $C$
    is the empty choreography, or there exists a configuration $\cfg{C'}{\Theta}'{\rho'}$ and
    label $\ell$ such that $\cfg{C}{\Theta}{\rho} \xrightarrow{\ell} \cfg{C'}{\Theta}'{\rho'}$
\end{lemma}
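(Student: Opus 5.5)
We argue by case analysis on the first instruction of $C$, proceeding by induction on the typing derivation $\emptyset;\emptyset;\ol{\Theta} \vdash C$. Note that we will never need the \textsc{delay} rule. Since both contexts are empty, the head instruction always has a step available: either an expression step inside it (a \textsc{C} rule) or a computational step (a \textsc{B} rule). The case $C = 0$ is immediate. In every other case $C = I; C_0$, and we inspect the last typing rule, which is determined by the shape of $I$.

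For the local cases $I = \letin{A.x}{e}{}$, $\letin{A.!x}{e}{}$, $\letin{A.(x_1,x_2)}{e}{}$, and $I = A.e \rightsquigarrow B.x$, the typing rule provides a local derivation $\ol{\Gamma}(A);\Delta_1;\Theta_1 \vdash e : \tau$. Because $\ol{\Gamma}$ and $\ol{\Delta}$ are empty, this is $\emptyset;\emptyset;\Theta_1 \vdash e : \tau$, where $\Theta_1$ is a sub-map of $\ol{\Theta}(A) = \Theta_1,\Theta_2$. Since $\ol{\Theta} \vdash \rho$, we also have $\Theta_1 \vdash \rho$: every index in $\Theta_1$ is a valid index of $\rho$. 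We then apply \texttt{Qoreo.Expr.progress}, which gives two possibilities.

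\emph{If $e$ takes a step} $\cfg{e}{\Theta_1}{\rho} \rightarrow \cfg{e'}{\Theta_1'}{\rho'}$, we lift it to the full map with \Cref{lemma:e.step_weakening}, obtaining $\cfg{e}{\Theta_1,\Theta_2}{\rho} \rightarrow \cfg{e'}{\Theta_1',\Theta_2}{\rho'}$. This requires $\Theta_1' \bot \Theta_2$. We get it by observing that a local step only removes references (measurement) or adds fresh ones (\textsc{newB}, whose side condition is freshness in the current map). I expect this freshness bookkeeping to be the main nuisance. If the Rocq statement does not give it for free, we will first establish a small auxiliary lemma: a step from $\Theta_1$ yields $\Theta_1'$ with $\dom{\Theta_1'} \subseteq \dom{\Theta_1} \cup \mathit{fresh}$. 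Alternatively, we choose the fresh name globally fresh, which is allowed because \textsc{newB} is nondeterministic in $q$. We then apply the corresponding contextual choreography rule, for example \textsc{sendC}, with label $A$.

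\emph{If $e$ is a value}, we use canonical forms for the local language, derived by inversion on the typing rules. A value of type $!\tau$ has the form $!e''$, so \textsc{sendB} or the let-$!$ $\beta$-rule fires. A value of type $\tau_1\otimes\tau_2$ is a pair $(v_1,v_2)$, so the pair-let $\beta$-rule fires. Any value fires \textsc{letB}.

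The remaining case is entanglement, $I = A.x \leftrightsquigarrow B.y$. Here \textsc{entangleB} always applies: we pick fresh $q_1,q_2$ and extend $\rho$ with the Bell state. One technicality must be handled. If $A$ or $B$ is not yet in $\dom{\ol{\Theta}}$, the rule's pattern $\ol{\Theta},A[\Theta_A],B[\Theta_B]$ must be matched with $\Theta_A = \emptyset$ (or $\Theta_B = \emptyset$). We expect this to follow from the way actor maps treat missing actors as empty; otherwise we add a lemma stating that $\ol{\Theta}$ is equal, as a map, to $\ol{\Theta},A[\ol{\Theta}(A)]$. Since we never need induction on $C_0$, the argument is in fact a case analysis rather than a genuine induction.
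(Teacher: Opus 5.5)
Your proposal is correct and matches the intended argument in essence. You dispatch on the head instruction via its typing rule, apply local progress at $\Theta_1$, and either lift the step with step weakening or fire the corresponding $\beta$-rule by canonical forms; \textsc{entangleB} always fires, so \textsc{delay} is never needed. One point needs tightening: freshness of a \textsc{newB} reference with respect to $\Theta_1$ does not by itself give $\Theta_1' \bot \Theta_2$, and \texttt{Qoreo.Expr.progress} is only existential, so your ``globally fresh'' fix requires a small auxiliary fact. Either a strengthened local progress or a lemma that any step can be redone with its new reference chosen outside a given finite set such as $\dom{\Theta_2}$ will do.
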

Putting the above together, we can state type safety as follows.
The result is proved by induction on $\rightarrow^\ast$ and as a consequence of lemmas \ref{lemma:C.preservation}
and \ref{lemma:C.progress}. In short, lemma \ref{lemma:C.preservation} guarantees preservation
of typing and well-scopedness of configurations through multiple steps, and lemma
\ref{lemma:C.progress} guarantees progress at every reduction step of the given well-typed choreography. 
\begin{theorem}[\texttt{Qoreo.Choreography.safety}]
    \label{thm:C.safety}
    Let $\emptyset;\emptyset;\Theta \vdash C$ and
    $\Theta \vdash \rho$.  Then whenever
    $\cfg{C}{\Theta}{\rho} \rightarrow^\ast \cfg{C'}{\Theta'}{\rho'}$,
    the configuration $\cfg{C'}{\Theta'}{\rho'}$ is not stuck.
\end{theorem}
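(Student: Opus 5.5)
The plan is to strengthen the statement into an invariant over multi-step reduction and then close it off with a single use of progress. Concretely, I prove by induction on the derivation of $\cfg{C}{\Theta}{\rho} \rightarrow^\ast \cfg{C'}{\Theta'}{\rho'}$ that every reachable configuration satisfies both $\emptyset;\emptyset;\Theta' \vdash C'$ and $\Theta' \vdash \rho'$. Once this invariant is in hand, \cref{lemma:C.progress} applied to $\cfg{C'}{\Theta'}{\rho'}$ shows that $C'$ either is the empty choreography $0$ or takes a labelled step. Either way the configuration is not stuck, where ``not stuck'' means precisely this disjunction.

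The induction has two cases.
\begin{itemize}
\item \emph{Reflexive case.} The invariant is the hypothesis itself.
\item \emph{Transitive case.} We have $\cfg{C}{\Theta}{\rho} \xrightarrow{\ell} \cfg{C_2}{\Theta_2}{\rho_2}$ followed by $\cfg{C_2}{\Theta_2}{\rho_2} \rightarrow^\ast \cfg{C'}{\Theta'}{\rho'}$. I apply \cref{lemma:C.preservation-hl} to the first step to obtain $\emptyset;\emptyset;\Theta_2 \vdash C_2$ and $\Theta_2 \vdash \rho_2$. These are exactly the hypotheses needed to apply the induction hypothesis to the remaining multi-step derivation.
\end{itemize}
The induction should be stated generally in the starting configuration, for all $C$, $\Theta$ and $\rho$, so that it can be reapplied at $\cfg{C_2}{\Theta_2}{\rho_2}$.

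\cref{lemma:C.preservation-hl} itself is derived from \cref{lemma:C.preservation} by instantiating $\overline{\Gamma}$ and $\overline{\Delta}$ as empty actor maps. The side condition $\overline{\Gamma}(A) = \overline{\Delta}(A) = \emptyset$ for all $A \in \Actors{\ell}$ then holds trivially, whatever the label $\ell$. So the label, which the $\rightarrow^\ast$ relation discards anyway, plays no role.

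The main subtlety is bookkeeping rather than mathematics. The statement writes $\Theta$ where the choreographic judgments use actor maps $\overline{\Theta}$, so I read $\Theta$ as an actor map throughout and $\Theta \vdash \rho$ as the pointwise condition defined before \cref{lemma:C.preservation-hl}. The preservation lemma must also return well-scopedness $\overline{\Theta}' \vdash \rho'$, not just typing. Without it, progress cannot be invoked at later configurations, since progress requires well-scopedness (for example, to perform \textsc{measB} or \textsc{UB} steps inside a local computation). Because \cref{lemma:C.preservation} already delivers both conclusions, the genuinely hard work, namely handling \textsc{delay} and the linear substitutions, has been discharged earlier, and this theorem is a short induction.
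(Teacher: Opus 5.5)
Your proposal is correct and follows the paper's own argument. Like the paper, you induct on $\rightarrow^\ast$, carry typing and well-scopedness forward with the preservation lemma (instantiated at empty $\overline{\Gamma},\overline{\Delta}$ so the label side condition is trivial), and finish with one application of \cref{lemma:C.progress}.
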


\section{The process language}
\label{sec:network}
EPP for Qoreo can be established with a relatively simple process language. A process describes a local computation run running independently by an actor, with the addition of (one-sided) communication patterns. For classical communication, these include \text{send} and \text{receive}; for quantum communication they include the ability to establish an EPR pair with another actor $B$, which we write $\establishwith{x}{B}{P}$.
\begin{align*}
    P &::= 0 \mid \sendin{v}{B}{P} \mid \letin{x}{e}{P} \mid \letin{!x}{e}{P} \mid \letin{(x_1,x_2)}{e}{P} \\
    & \mid \receivefrom{B}{x}{P} \mid \establishwith{x}{B}{P} \tag{processes}
\end{align*}
A network is a map from actors to processes, enabling communication between two processes with matching communication actions. Note that we consider two networks equivalent up to reordering.
\begin{align*}
    N &::= A_1[P_1] \parallel \ldots \parallel A_n[P_n]
        \tag{networks}
\end{align*}

\subsection{Operational semantics}

Individual processes can step independently-- i.e., perform local computation-- if not blocked by a communication pattern. This relation is written $\cfg{P}{\Theta}{\rho} \rightarrow \cfg{P'}{\Theta'}{\rho'}$. The rules include the usual contextual and $\beta$-reduction rules for non-communication patterns. For example:\footnote{For the full set of reduction rules for processs, see the Rocq development.}
\begin{mathpar} \small
    \inferrule*[right=letPC]
    {
        \cfg{e}{\Theta}{\rho} \rightarrow \cfg{e'}{\Theta'}{\rho'}
    }
    {
        \cfg{\letin{x}{e}{P}}{\Theta}{\rho}
        \rightarrow
        \cfg{\letin{x}{e'}{P}}{\Theta'}{\rho'}
    }
    
    \inferrule*[right=letPB]
    {
        ~
    }
    {
        \cfg{\letin{x}{v}{P}}{\Theta}{\rho}
        \rightarrow
        \cfg{P[v/x]}{\Theta}{\rho}
    }
\end{mathpar}

The operational semantics of a network has a similar structure to that of choreographies, with a judgment $\cfg{N}{\overline{\Theta}}{\rho} \xrightarrow{\ell} \cfg{N'}{\overline{\Theta}'}{\rho'}$, as shown in \cref{fig:semantics-network}. It allows for local computations as well as classical and quantum communication actions. In classical communication, the send and receive components of the communication must by synchronized. In quantum communication, entanglement generation must by synchronized between two actors.

\begin{figure}
    \centering \footnotesize
\begin{mathpar}
    \inferrule*[right=local]
    {
        \cfg{P_A}{\Theta_A}{\rho} \rightarrow \cfg{P_A'}{\Theta_A'}{\rho'}
    }
    {
        \cfg{A[P] \parallel N}{\left(A[\Theta], \overline{\Theta}\right)}{\rho}
        \xrightarrow{A}
        \cfg{A[P'] \parallel N}{\left(A[\Theta'], \overline{\Theta}\right)}{\rho'}
    }

    \inferrule*[right=com-C]
    {
        A \neq B
    }
    {
        \cfg{A[\sendin{!e}{B}{P_A}] \parallel B[\receivefrom{A}{x}{P_B}] \parallel N}{\overline{\Theta}}{\rho}
        \xrightarrow{A.e \rightsquigarrow B}
        \cfg{A[P_A] \parallel B[P_B[e/x]] \parallel N}
            {\overline{\Theta}}{\rho}
    }

    \inferrule*[right=com-Q]
    {
        A \neq B \\
        q_1,q_2 ~\text{fresh} \\
        n = \text{num qubits of}~\rho \\
        \overline{\Theta}' = \overline{\Theta},A.q_a \mapsto n,B.q_B \mapsto n+1 \\
        \rho' = \rho \otimes \ket{\Phi^+}\bra{\Phi^+} \\
        N' = A[P_A[q_A/x]] \parallel B[P_B[q_b/y]] \parallel N
    }
    {
        \cfg{A[\establishwith{x}{B}{P_A}] \parallel B[\establishwith{y}{A}{P_B}] \parallel N}
            {\overline{\Theta}}{\rho} 
        \xrightarrow{A \leftrightsquigarrow B}
        \cfg{N'}{\overline{\Theta}'}{\rho'}
    }
\end{mathpar}
    \caption{Operational semantics rules for networks. Substitution in processes $P[e/x]$ is defined in the usual way.}
    \label{fig:semantics-network}
\end{figure}

\subsection{Type system}

The process language does not have a type system itself; it will inherit type-safety as well as deadlock-safety from the soundness and correctness of endpoint projection.

\section{Endpoint Projection}
\label{sec:epp}
Endpoint projection maps choreographies onto the set of processes that run on individual nodes. This is accomplished by ``filtering'' choreographies to extract, for each actor $A$, the process (set of instructions) they are involved in-- that is, their local computations and communications where they are either sender or receiver. Given a choreography $C$ and an actor $A$, we write $\interp{C}_A$ for the corresponding process run by $A$:
\begin{align*}
    \interp{B_1.e \rightsquigarrow B_2.x; C}_A
        &= \begin{cases}
            \sendin{e}{B_2}{\interp{C}_A} & A = B_1 \\
            \receivefrom{B_1}{x}{\interp{C}_A} & A = B_2 \neq B_1 \\
            \interp{C}_A &\text{otherwise}
        \end{cases} \\
    \interp{B_1.x_1 \leftrightsquigarrow B_2.x_2; C}_A
        &= \begin{cases}
            \establishwith{x_1}{B_2}{\interp{C}_A} & A = B_1 \\
            \establishwith{x_2}{B_1}{\interp{C}_A} & A = B_2 \neq B_1 \\
            \interp{C}_A &\text{otherwise}
        \end{cases} \\
    \interp{\letin{B.x}{e}{C}}_A
        &= \begin{cases}
            \letin{x}{e}{\interp{C}_A} & A = B \\
            \interp{C}_A & \text{otherwise}
        \end{cases} \\
    \interp{\letin{B.!x}{e}{C}}_A
        &= \begin{cases}
            \letin{!x}{e}{\interp{C}_A} & A = B \\
            \interp{C}_A & \text{otherwise}
        \end{cases} \\
    \interp{\letin{B.(x_1,x_2)}{e}{C}}_A
        &= \begin{cases}
            \letin{(x_1,x_2)}{e}{\interp{C}_A} & A = B \\
            \interp{C}_A & \text{otherwise}
            \end{cases}\\
     \interp{0}_A
        &= 0
\end{align*}

Notice that not all choreographies are projectable; specifically, if a choreography has a communication primitive from one actor to itself, it will not be projected correctly. We say that choreographies without self-send are \emph{well-formed}. In our EPP correctness results we will restrict consideration to well-formed choreographies. However, typing enforces well-formedness so this will be implied by our main network safety results. Note that these require the assumption of typability, since projection alone does not guarantee semantic safety (not getting stuck). We begin with lemmas establishing these basic properties. 
\begin{lemma} Both of the following hold:
  \label{lemma:wtwf}    
    \begin{enumerate}
        \item (\texttt{Qoreo.Choreography.WellTyped\_WellFormed}): Well-typed choreographies are well-formed
        \item (\texttt{Qoreo.Network.EPP\_projectable}) If $C$ is well-formed, then $C$ is projectable; i.e. for every $A$, there exists some $P$ such that $\interp{C}_A = P$.
    \end{enumerate}
\end{lemma}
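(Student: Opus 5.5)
Both parts go by structural induction, on the typing derivation for (1) and on the syntax of $C$ for (2). Before either, I would fix well-formedness as an inductive predicate $\mathit{wf}(C)$ defined instruction by instruction: $\mathit{wf}(0)$ holds; $\mathit{wf}(A.e \rightsquigarrow B.x; C)$ holds iff $A \neq B$ and $\mathit{wf}(C)$; $\mathit{wf}(A.x \leftrightsquigarrow B.y; C)$ holds iff $A \neq B$ and $\mathit{wf}(C)$; and each \textsf{let}-form is well-formed iff its continuation is.

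For (1), I induct on the derivation of $\overline{\Gamma};\overline{\Delta};\overline{\Theta} \vdash C$. The \textsc{let}, \textsc{let}-$!$ and \textsc{let}-$\otimes$ cases, and the empty choreography, follow directly from the induction hypothesis on the continuation. The interesting cases are \textsc{send} and \textsc{entangle}, where I must extract $A \neq B$ from the premises.

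\begin{itemize}
\item \textsc{entangle}: both $A.x : \qubit$ and $B.y : \qubit$ are added to the linear actor map. If $A = B$, they land in the same component $\overline{\Delta}(A)$. I would use the side conditions $A.x, B.y \notin \ol{\Gamma},\ol{\Delta}$ together with a disjointness invariant on extended contexts to rule this out. If the paper's concatenation $\ol{\Delta}, A.x{:}\qubit, A.y{:}\qubit$ is in fact permitted, then well-typedness alone does not force $A \neq B$. In that case the Rocq typing rule must carry an explicit premise $A \neq B$, and I would read the rule with that premise, as the mechanization presumably does.
\item \textsc{send}: likewise, I expect the actual rule to carry a side condition $A \neq B$, or one derivable from freshness of $B.x$ relative to $A$'s contexts.
\end{itemize}

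Pinning down exactly which premise yields $A\neq B$ is the main obstacle. If it is not present in the figure as stated, part (1) depends on the elided premises of the full rules in the development.

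For (2), I would strengthen the claim to: for all $C$ with $\mathit{wf}(C)$ and every actor $A$, $\interp{C}_A$ is defined. The reason definedness can fail at all is that projection is naturally a partial function (in Rocq, an option-valued function), which returns failure exactly on self-communication, the case $A = B_1 = B_2$. The paper's case split lists $A=B_1$ first, so self-sends would silently project to a send with no matching receive; I would therefore treat the mechanized projection as undefined there. I then induct on $C$, using $\mathit{wf}(C)$:

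\begin{itemize}
\item $0$ projects to $0$.
\item For the \textsf{let}-forms, the projection either prefixes a local \textsf{let} or skips the instruction. Either way it is defined whenever $\interp{C'}_A$ is, which holds by the induction hypothesis.
\item For communication instructions, $\mathit{wf}$ gives $B_1 \neq B_2$. So for every $A$, exactly one of the three branches $A=B_1$, $A=B_2\neq B_1$, or otherwise applies, with no failing branch. Definedness then follows from the induction hypothesis on the continuation.
\end{itemize}

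This part is routine; all the substance lies in matching the definition of well-formedness to the failure cases of projection.
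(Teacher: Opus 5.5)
Your proposal is correct and is essentially the argument the paper relies on; the paper gives no written proof, only the Rocq lemma names. Part (1) is an induction on the typing derivation in which only \textsc{send} and \textsc{entangle} are nontrivial, with $A \neq B$ supplied by a side condition of the full rules (the figure shows only select rules, and the paper asserts that typing enforces well-formedness); part (2) is a routine induction on $C$. You are also right to fall back on an explicit premise rather than freshness or disjointness: with $x \neq y$, the extension $\ol{\Delta}, A.x{:}\qubit, A.y{:}\qubit$ violates none of the displayed premises, so nothing else can force $A \neq B$.
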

We construct the network projected from a choreography $C$, denoted $\interp{C}$, by process filtering each actor $A_1,\ldots,A_n$ occuring in $C$: 
\begin{align*}
    \interp{C} &= A_1[\interp{C}_{A_1}] \parallel \cdots \parallel A_n[\interp{C_n}_{A_n}]
\end{align*}

Now we demonstrate correctness of EPP by proving both soundness and completeness. We prove that EPP is sound and complete with respect to the semantics of choreographies. Soundness says that if a choreography takes a step, it's projection takes a step to a projection of its redex.
\begin{theorem}[\texttt{Qoreo.Network.soundness}]
    \label{thm:epp-sound}
    Suppose $\cfg{C}{\Theta}{\rho} \xrightarrow{\ell} \cfg{C'}{\Theta'}{\rho'}$ and $C$ is well-formed. Then $\cfg{\interp{C}}{\Theta}{\rho} \xrightarrow{\ell} \cfg{\interp{C'}}{\Theta'}{\rho'}$.
\end{theorem}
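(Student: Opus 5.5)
We prove the statement by induction on the derivation of the choreographic step $\cfg{C}{\Theta}{\rho} \xrightarrow{\ell} \cfg{C'}{\Theta'}{\rho'}$, with a case for each contextual (\textsc{C}), computational (\textsc{B}) and \textsc{delay} rule. Since $\interp{C}$ is assembled actor by actor, it suffices in each case to compute $\interp{C}_A$ and $\interp{C'}_A$ for every $A$ and then read off the network rule of \cref{fig:semantics-network} that connects them. Well-formedness of $C$ (via \cref{lemma:wtwf}) makes all projections defined, and it gives $B_1 \neq B_2$ for communication instructions, which discharges the side conditions $A\neq B$ of \textsc{com-C} and \textsc{com-Q}. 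We also need that well-formedness is preserved by steps and by substitution, so that $\interp{C'}$ is defined; this is a short auxiliary induction.

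\emph{Base cases.} In the contextual cases, such as \textsc{sendC} and the \textsc{let} variants, only $\interp{C}_A$ changes: its head instruction is $\letin{x}{e}{\cdot}$, or $\sendin{e}{B}{\cdot}$, and so on. The local step on $e$ lifts to a process step, for which we need the matching contextual process rules (e.g.\ \textsc{letPC}), and then \textsc{local} applies with label $A$. The \textsc{letB} cases likewise go through \textsc{letPB} and \textsc{local}. For \textsc{sendB} and \textsc{entangleB} we use \textsc{com-C} and \textsc{com-Q} respectively.

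The central auxiliary fact is a \emph{projection-commutes-with-substitution} lemma:
\[
\interp{C[e/B.x]}_A = \interp{C}_A[e/x] \text{ if } A = B, \qquad \interp{C[e/B.x]}_A = \interp{C}_A \text{ otherwise.}
\]
We prove it by induction on $C$ using the definition of process-respecting substitution. The $\mathit{rebound}$ clause must line up with shadowing in process substitution; this holds because projection keeps exactly the binders of actor $A$. With this lemma, $\interp{C[e/B.x]}$ is exactly the \textsc{com-C} target $B[P_B[e/x]]$ with all other actors unchanged, and analogously for the two substitutions in \textsc{entangleB}. For \textsc{entangleB} we also have to check that the fresh names and indices match. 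The choreography rule writes $2^n$ where \textsc{com-Q} writes $n$, and we will need to reconcile these (I expect the Rocq definitions agree, and that the paper's text has a typo).

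\emph{Delay case.} This is the main obstacle. Here $C = I;C_0$ with $C_0 \xrightarrow{\ell} C_0'$ and $\Actors{I}\cap\Actors{\ell}=\emptyset$. By the induction hypothesis $\interp{C_0} \xrightarrow{\ell} \interp{C_0'}$. For actors $A \notin \Actors{I}$ we have $\interp{I;C_0}_A = \interp{C_0}_A$. For actors in $\Actors{I}$, $\interp{I;C_0}_A$ prefixes $\interp{C_0}_A$ with one instruction, but since these actors are not in $\Actors{\ell}$, their processes in $\interp{C_0}$ and $\interp{C_0'}$ coincide. We therefore need an inversion lemma: a network step with label $\ell$ only changes the processes of actors in $\Actors{\ell}$, and it remains valid after arbitrarily modifying processes of actors outside $\Actors{\ell}$. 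This is proved by inspection of \textsc{local}, \textsc{com-C} and \textsc{com-Q}, using the fact that $\overline{\Theta}$ entries of uninvolved actors are preserved.

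A subtlety arises when $\interp{C_0}$ and $\interp{I;C_0}$ range over different actor sets, namely when an actor occurs only in $I$. We handle this by treating absent actors as running $0$, or by projecting over a fixed superset of actors. With that convention, rebuilding the network from the modified processes yields $\interp{I;C_0}\xrightarrow{\ell}\interp{I;C_0'}$.
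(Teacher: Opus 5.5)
Your proposal is correct and is essentially the argument behind the paper's mechanized soundness result (the paper states the theorem without a written proof): induction on the choreographic step, with your projection--substitution lemma being exactly the paper's \texttt{EPP\_subst\_iff}, used in the \textsc{letB}, \textsc{sendB} and \textsc{entangleB} cases, and a framing property of network steps for \textsc{delay}. Your caveats are also well placed. The $2^n$ versus $n$ index mismatch between \textsc{entangleB} and \textsc{com-Q} is a presentational inconsistency in the paper. The convention that absent actors run $0$ matches the paper's terminal-network condition $N'(A)=0$ for all $A$, and it is needed not only in \textsc{delay} but also in base cases where an actor disappears from $C'$, e.g.\ $\letin{A.x}{v}{0}$.
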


Completeness says that if a network takes a step, and it is the projection of a choreography, the projected choreography takes a corresponding step. To prove this, we needed to reason about the behavior of EPP. First, we prove the following fact about process-respecting substitution in choreographies.
\begin{lemma}[\texttt{Qoreo.Network.EPP\_subst\_iff}]
Both of the following hold:
\begin{enumerate}
    \item $\interp{C[v/A.x]}_A = PA$ if and only if $\interp{C}_A = PA_0$ and $PA = PA_0[v/x]$.
    \item If $D\neq A$ then $\interp{C[v/A.x]}_D = PD$ if and only if $\interp{C}_D=PD$.
\end{enumerate}   
\end{lemma}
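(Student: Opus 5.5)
The proof goes by structural induction on $C$, proving both parts simultaneously for arbitrary $PA$ (resp. $PD$). Since projection $\interp{\cdot}_A$ is defined as a partial function that fails only on self-communication, each "iff" splits into two facts. First, projectability of $C[v/A.x]$ coincides with projectability of $C$. This holds because process-respecting substitution never changes the actors or the shape of an instruction: it only rewrites the expression payload $B.e'$ when $B = A$. Second, when both projections exist, they are related in the stated way. So I would first prove an auxiliary equation form: $\interp{C[v/A.x]}_A = \interp{C}_A[v/x]$ and $\interp{C[v/A.x]}_D = \interp{C}_D$ for $D \neq A$, with definedness preserved. The biconditionals then follow by rewriting.

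The base case $0$ is immediate, since $0[v/A.x]=0$ and $\interp{0}_A = 0 = 0[v/x]$. For $I;C$, I would case on the instruction and on whether $\mathit{rebound}(A.x,I)$ holds.

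\begin{itemize}
\item \textbf{Send $B_1.e' \rightsquigarrow B_2.y$.} The payload becomes $(B_1.e')[v/A.x]$, which is $e'[v/x]$ if $B_1 = A$ and $e'$ otherwise.
\begin{itemize}
\item Projecting onto $A = B_1$ gives $\sendin{e'[v/x]}{B_2}{\cdots}$. This matches process substitution pushing into the sent expression.
\item Projecting onto $A = B_2$ gives a receive binding $y$. If $y = x$ this is exactly the rebound case: the choreography substitution stops, and process substitution also stops at the binder.
\item Projecting onto any $D \neq A$ is unchanged by the substitution, because the payload is only altered at $B_1 = A$.
\end{itemize}
\item \textbf{Entangle.} The instruction is untouched. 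Rebinding of $A.x$ matches the process-level binder $\establishwith{x}{\cdot}{}$.
\item \textbf{The three let forms at actor $B$.}
\begin{itemize}
\item If $B = A$, the bound expression gets $e'[v/x]$. The continuation is substituted unless $x$ is rebound, mirroring $(\letin{x}{e'}{P})[v/x]$ in the process calculus.
\item If $B \neq A$, the bound expression is unchanged. When projecting onto $A$ the instruction vanishes and the induction hypothesis applies. When projecting onto $D = B$, the expression is unchanged because $B \neq A$.
\end{itemize}
\end{itemize}

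In every non-rebound case the induction hypothesis on $C$ closes the goal.

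The main obstacle I expect is aligning the two notions of shadowing, $\mathit{rebound}(A.x,I)$ and capture-avoiding process substitution. A process-level binder $x$ arises only from instructions at $A$: a receive at $A$, entangle at $A$, or a let at $A$, including the pair and bang forms. I must check that $\mathit{rebound}$ is true exactly in those cases. I also need to confirm that capture avoidance causes no renaming, which I would justify by assuming $v$ is closed (a value under empty $\Gamma,\Delta$, as in the uses for \textsc{sendB} and \textsc{entangleB}) or by the Rocq development's locally-nameless or closed-value convention. The remaining cases are routine case splits on actor equality.
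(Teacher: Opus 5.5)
Your proposal is correct and follows essentially the route one would expect the mechanized proof to take; the paper gives no written proof, only the Rocq lemma name. The argument is structural induction on $C$, splitting on the instruction form, the actor equalities, and $\mathit{rebound}(A.x,I)$. The key check is that $\mathit{rebound}$ holds exactly when the projection onto $A$ introduces a binder for $x$. Reading projection as partial on self-communication is also essential: under a total reading, a self-send $A.e \rightsquigarrow A.x$ would break part 1. Your closedness hedge is unnecessary if substitution is the non-renaming kind that simply stops at shadowing binders, as is typical for named-variable Rocq developments. In that case choreographic and process substitution push under exactly the same binders for arbitrary $v$, so the statement holds without any extra hypothesis.
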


The following lemma supports a basic proof strategy for completeness. Intuitively, it allows us to deconstruct a given choreography into an instruction and continuation, and to find the location of its projection in the network.

\begin{lemma}[\texttt{Qoreo.Network.EPP\_N\_cons\_inversion}]
    If $\interp{I;C} = N$ then there exists some network $N'$ such that $\interp{C} = N'$. Specifically, $N'$ is defined so that whenever $N(A)=PA$ and $A \in \Actors{I}$, then $N'(A)=\interp{I;C}_A$.
\end{lemma}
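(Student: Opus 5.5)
The plan is to argue pointwise over actors, by case analysis on the shape of the instruction $I$, using only the defining clauses of $\interp{-}_A$. Since $\interp{I;C}$ is built as the parallel composition of $\interp{I;C}_{A}$ over the actors $A_1,\ldots,A_n$ occurring in $I;C$, it suffices to show two things: for every such $A$, the local projection $\interp{C}_A$ is defined; and $\interp{C}_A$ agrees with the corresponding component of $N$ in the way the statement describes. The actors of $C$ form a subset of those of $I;C$. I would therefore take $N'$ to be the network with components $A_i[\interp{C}_{A_i}]$, ranging over the same actor set; extra actors projecting to $0$ are harmless up to the stated equivalence of networks.

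\emph{Step 1 (projectability of the continuation).} Fix an actor $A$ and assume $\interp{I;C}_A = P_A$. I would inspect the five instruction forms in turn. In every branch of every clause, $\interp{I;C}_A$ is either a single process prefix placed in front of $\interp{C}_A$ (a send, receive, entangle, or one of the three let forms), or it is literally $\interp{C}_A$ (the ``otherwise'' branches). The only way the projection can fail is the self-communication case excluded by well-formedness. Since $\interp{I;C}_A$ is defined, we are not in that case. So inverting the (partial, relational, as in the Rocq development) definition of projection yields some $P'_A$ with $\interp{C}_A = P'_A$. Doing this for every actor gives $N'$ with $\interp{C} = N'$.

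\emph{Step 2 (relating $N$ and $N'$).} For $A \notin \Actors{I}$, each clause falls into its ``otherwise'' branch, so $N(A) = \interp{I;C}_A = \interp{C}_A = N'(A)$. The projection of $N$ at an uninvolved actor is therefore unchanged. For $A \in \Actors{I}$, the same inversion shows that $N(A) = \interp{I;C}_A$ is exactly the one-instruction prefix (determined by $I$ and by $A$'s role in it) applied to $N'(A)$. This is the precise sense in which $N'$ locates the projection of $C$ inside $N$, and it is the form that the completeness proof will consume.

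I expect the main obstacle to be bookkeeping rather than any conceptual difficulty. Projection is presumably a relation or a partial function in the mechanization, so each inversion must also establish uniqueness/functionality of $\interp{-}_A$. This lets the process $P_A$ obtained from $N$ be identified with the constructed one. Similarly, the network-level equality $\interp{C} = N'$ must be taken up to reordering of parallel components. I would prove functionality of $\interp{-}_A$ first as a small auxiliary lemma by induction on $C$. After that, the case analysis above is routine.
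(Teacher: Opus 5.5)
Your proposal is correct and follows the natural per-actor argument that the mechanized lemma presumably encodes (the paper itself gives no prose proof): invert the projection by case analysis on $I$. Actors outside $\Actors{I}$ fall into the ``otherwise'' branches, so $N'(A)=N(A)$, and for involved actors $N(A)$ is a single process prefix in front of $N'(A)=\interp{C}_A$. This prefix relation is the right reading of the statement's final clause, which cannot hold literally for $A\in\Actors{I}$ because there $\interp{I;C}_A$ strictly extends $\interp{C}_A$; your Step~2 proves the intended content instead.
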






With the above results, we are able to prove completeness of endpoint projection.
\begin{theorem}[\texttt{Qoreo.Network.completeness}]
    \label{thm:epp-complete}
    If $\cfg{\interp{C}}{\Theta}{\rho} \xrightarrow{\ell} \cfg{N'}{\Theta'}{\rho'}$ for a well-formed choreography $C$, then there exists some $C'$ such that 
    $\cfg{C}{\Theta}{\rho} \xrightarrow{\ell} \cfg{C'}{\Theta'}{\rho'}$ and $\interp{C'}=N'$.
\end{theorem}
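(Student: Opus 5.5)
We prove the completeness theorem by induction on the structure of $C$, with a case analysis on the network rule used to derive $\cfg{\interp{C}}{\Theta}{\rho} \xrightarrow{\ell} \cfg{N'}{\Theta'}{\rho'}$. If $C = 0$, every actor's projection is $0$ and no rule applies, so the case is vacuous. Otherwise write $C = I; C_0$. The label $\ell$ determines a set of actors $\Actors{\ell}$, and we split according to whether $\Actors{\ell} \cap \Actors{I} = \emptyset$.

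\emph{Case $\Actors{\ell}\cap\Actors{I}\neq\emptyset$.} Some actor of $I$ is involved in the step, and we argue that the step must consume $I$'s projection.
\begin{enumerate}
\item For every $A \in \Actors{I}$, the process $\interp{I;C_0}_A$ has the head prefix dictated by $I$: a send, receive, entangle, or let. Well-formedness rules out self-communication, so the two endpoints of a communication are distinct.
\item Suppose the rule is \textsc{local} at $A$. Then $A$'s head is a let, so $I$ is a let at $A$. If the process takes a contextual step on $e$, we build $C'$ by the matching contextual choreography rule. If it takes a $\beta$-step \textsc{letPB}, we use \textsc{letB}, and \texttt{EPP\_subst\_iff} shows $\interp{C_0[v/A.x]} = N'$ componentwise: its part (1) handles $A$ and part (2) handles every other actor.
\item If $I$ is a send $A.e\rightsquigarrow B.x$ and $e$ is not yet a value, $A$'s local step on $e$ corresponds to \textsc{sendC}.
\item Suppose the rule is \textsc{com-C} or \textsc{com-Q} involving an actor of $I$. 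Then that actor's head is a communication prefix naming its partner. Because $I$ is the first instruction, the partner's head must be the dual prefix generated by the same $I$, so the pair is exactly $I$. We then fire \textsc{sendB} or \textsc{entangleB} and again use \texttt{EPP\_subst\_iff} to match $N'$.
\item For entanglement we must pick the same fresh $q_1,q_2$ and the same positions in $\Theta'$ and $\rho'$ as the network rule. If the Rocq rules differ only up to choice of fresh names, we instantiate the choreographic freshness to the network's choice.
\end{enumerate}

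\emph{Case $\Actors{\ell}\cap\Actors{I}=\emptyset$.} No actor of $I$ participates, so the step happens entirely within actors whose projections coincide with those of $C_0$ by \texttt{EPP\_N\_cons\_inversion}.
\begin{enumerate}
\item For $A\notin\Actors{I}$ we have $\interp{I;C_0}_A = \interp{C_0}_A$.
\item We restrict the network step to these actors, using the fact that network steps only touch the processes named in the label. This yields a step $\cfg{\interp{C_0}}{\Theta}{\rho}\xrightarrow{\ell}\cfg{N_0'}{\Theta'}{\rho'}$, where $N_0'$ agrees with $N'$ outside $\Actors{I}$.
\item The induction hypothesis on $C_0$, which is well-formed as a suffix of $C$, gives $C_0'$ with $\interp{C_0'} = N_0'$ and a choreography step with label $\ell$. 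Applying \textsc{delay} gives $I;C_0 \xrightarrow{\ell} I;C_0'$.
\item It remains to show $\interp{I;C_0'} = N'$. Actors of $I$ are untouched in $N'$ and project through $I$ identically. Actors outside $I$ project as in $C_0'$.
\end{enumerate}

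The main obstacle is the bookkeeping in the $\Actors{\ell}\cap\Actors{I}=\emptyset$ case. Relating a step of the whole projected network to a step of $\interp{C_0}$ requires that network steps commute with replacing the components of uninvolved actors, and that the actor maps $\overline{\Theta}$ for $\Actors{I}$ are unchanged. I would first prove a small frame lemma stating that a network step labelled $\ell$ is preserved under replacing the processes of actors outside $\Actors{\ell}$, and then apply it twice: once to transport the step down to $\interp{C_0}$, and once to lift the result back to $\interp{I;C_0'}$.
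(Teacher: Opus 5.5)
Your proposal is correct and follows essentially the paper's argument. The paper's sketch is case analysis on $\ell$, then induction. It uses \texttt{EPP\_subst\_iff} for the $\beta$-steps (\textsc{letB}, \textsc{sendB}, \textsc{entangleB}) and \texttt{EPP\_N\_cons\_inversion} to peel off the head instruction, so that a step by actors not mentioned in that instruction is lifted through \textsc{delay}. Your proof is the same argument with the case split and the induction nested in the opposite order; the frame property of network steps that you isolate as a separate lemma is the part the paper folds into \texttt{EPP\_N\_cons\_inversion} together with inversion on the network rules.
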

\begin{proof}[Proof sketch]
    Case analysis on $\ell$, then induction on the network step relation.
\end{proof}

As a consequence of our formal development, we can assert a network safety result-- that is, we can conclude that networks projected from well-typed choreographies do not go wrong, and particularly do not have communication deadlocks. Lemma \ref{lemma:wtwf} guarantees projectability, while correctness as established in Theorems \ref{thm:epp-sound} and \ref{thm:epp-complete} together with type safety in Qoreo as established in Theorem \ref{thm:C.safety} obtain the result. 
\begin{corollary}[\texttt{Qoreo.Network.safety}]
    \label{cor:net.safety}
    Suppose $\emptyset;\emptyset;\Theta \vdash C$ and $\Theta \vdash \rho$. If $\cfg{\interp{C}}{\Theta}{\rho} \rightarrow^\ast \cfg{N'}{\Theta'}{\rho'}$, then either $N'$ is the terminal network---$N'(A) = 0$ for all actors $A$---or $N'$ can take a step.
\end{corollary}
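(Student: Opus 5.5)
The plan is to transfer choreographic type safety (\Cref{thm:C.safety}) to the projected network by induction on the length of the network reduction $\cfg{\interp{C}}{\Theta}{\rho} \rightarrow^\ast \cfg{N'}{\Theta'}{\rho'}$. The invariant we carry through every step is that the current network is \emph{exactly} the projection of some choreography reachable from $C$. Concretely, we prove the following. For every such reduction there is a choreography $C'$ with $\cfg{C}{\Theta}{\rho} \rightarrow^\ast \cfg{C'}{\Theta'}{\rho'}$, $\interp{C'} = N'$, $\emptyset;\emptyset;\Theta' \vdash C'$ and $\Theta' \vdash \rho'$.

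The base case is immediate: take $C' = C$. In the inductive step we have reached $\cfg{N_1}{\Theta_1}{\rho_1}$ with $N_1 = \interp{C_1}$, and $N_1$ takes a step with label $\ell$. The steps are as follows.
\begin{enumerate}
\item By \Cref{lemma:wtwf}, $C_1$ is well-formed, since it is well-typed.
\item \Cref{thm:epp-complete} then gives $C_2$ with $\cfg{C_1}{\Theta_1}{\rho_1} \xrightarrow{\ell} \cfg{C_2}{\Theta_2}{\rho_2}$ and $\interp{C_2} = N_2$.
\item \Cref{lemma:C.preservation-hl} re-establishes typing and well-scopedness for $C_2$. The choreographic reduction sequence extends by one step.
\end{enumerate}

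For the conclusion, the invariant gives $C'$ with $\interp{C'} = N'$ and $\emptyset;\emptyset;\Theta' \vdash C'$. We then case on \Cref{lemma:C.progress} (equivalently \Cref{thm:C.safety}).
\begin{enumerate}
\item If $C' = 0$, then by definition $\interp{0}_A = 0$ for every actor $A$, so $N'$ is the terminal network.
\item Otherwise $\cfg{C'}{\Theta'}{\rho'} \xrightarrow{\ell} \cfg{C''}{\Theta''}{\rho''}$ for some $\ell$. Since $C'$ is well-formed by \Cref{lemma:wtwf}, \Cref{thm:epp-sound} gives $\cfg{\interp{C'}}{\Theta'}{\rho'} \xrightarrow{\ell} \cfg{\interp{C''}}{\Theta''}{\rho''}$. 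Hence $N'$ can step.
\end{enumerate}

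I expect the main obstacle to be reconciling the state representations, not the logical structure. The corollary is stated with a single $\Theta$, while the choreographic and network judgments use actor maps $\ol{\Theta}$. The invariant must also track equality of networks up to reordering, and it must handle the actor set of $\interp{C'}$. That set may shrink relative to $\interp{C}$ as actors drop out of the choreography. This matters because an actor absent from $\interp{C'}$ must count as $0$ for terminality.

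To handle this, I would fix the projection to range over the actors of the original $C$, giving $\interp{C'}_A = 0$ for actors that no longer occur. I would then check that \Cref{thm:epp-sound} and \Cref{thm:epp-complete} are stated with respect to this fixed actor set; in Rocq this is presumably how $\interp{\cdot}$ is defined as a total map. The fresh-name choices in entanglement steps must also be shown to agree between the network and the choreography. This already follows from completeness returning exactly $\Theta'$ and $\rho'$, but it needs to be checked.
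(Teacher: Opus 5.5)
Your proposal is correct and follows the paper's route. The paper's one-line justification combines Lemma~\ref{lemma:wtwf}, completeness (Theorem~\ref{thm:epp-complete}) to pull network steps back to choreographic steps, choreographic type safety (Theorem~\ref{thm:C.safety}) and soundness (Theorem~\ref{thm:epp-sound}) to push the final step forward. Your invariant, that each reachable network is the projection of a well-typed choreography reachable from $C$ and that preservation keeps each intermediate choreography well-formed so completeness applies, is what that justification leaves implicit.
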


\section{Implementation and extraction to NetQASM}
\label{sec:extraction}


\Cref{fig:rocq} shows the definition of expressions and choreographies in Rocq. Expressions and instructions are defined as a deep embedding via inductive datatypes for each abstract syntax tree. Choreographies themselves are lists of instructions.

\begin{figure}
\begin{subfigure}[b]{0.39\textwidth}
{\footnotesize\begin{verbatim}
Module Expr.
  Inductive t :=
  | Var : Var.t -> t
  | LetIn : Var.t -> t -> t -> t
  | QRef : Var.t -> t
  | Meas : t -> t
  | New : t -> t
  | Unitary : unitary -> t -> t
  | ...
End Expr
\end{verbatim}}
\end{subfigure}
\quad
\begin{subfigure}[b]{0.4\textwidth}
{\footnotesize\begin{verbatim}
Module Insn.
  Inductive t :=
   | Send (A:Actor.t) (e:Expr.t) (B:Actor.t) (x:Var.t)
   | Entangle : Actor.t -> Var.t -> Actor.t -> Var.t -> t
  | ...
End Insn.
...
Module Choreography.
  Definition t := list Insn.t.
End Choreography.
\end{verbatim}}
\end{subfigure}
    \caption{Rocq data structures representing Qoreo ASTs. We use the OCaml/Rocq convention of defining modules for major datatypes (\texttt{Expr}, \texttt{Insn}, \texttt{Choreography}) with \texttt{t} as its type.}
    \label{fig:rocq}
\end{figure}

We use a named representation of variables, represented internally as natural numbers.
For example, the quantum teleportation example from the introduction could be written explicitly as:
{\small\begin{verbatim}
Definition teleport Alice Bob q : Choreography.t :=
  [Insn.Entangle Alice 2 Bob 3;
  Insn.LetPair Alice 4 5 (Unitary CNOT (Pair (Expr.Var q) (Expr.Var 2)));
  Insn.Let Alice 6 (Unitary H (Expr.Var 4));
  Insn.Let Alice 7 (Meas (Expr.Var 6));
  Insn.Let Alice 8 (Meas (Expr.Var 5));
  Insn.Send Alice (Expr.Var 7) Bob 9;
  Insn.Send Alice (Expr.Var 8) Bob 10;
  Insn.Let Bob 11 (If (Expr.Var 10) (Unitary Z (Expr.Var 3)) (Expr.Var 3));
  Insn.Let Bob 12 (If (Expr.Var 9) (Unitary X (Expr.Var 11)) (Expr.Var 11))].
\end{verbatim}}
To avoid this explicit variable manipulation, we introduce a small domain-specific language to construct choreographies using automatic variable renaming using a state monad. The state monad, which we write \texttt{Qoreo A}, tracks fresh variables and constructively builds a choreography with each step. 
We can extract the choreography from a monadic computation using the following function:
\small{\begin{verbatim}
Definition run {T} (m : Qoreo T) (init_var : Var.t) : Network.Choreography.t * T :=
  let (s, v) := m {| max_var := init_var+1; chor := [] |} in
  (chor s, v).
\end{verbatim}}
With the addition of some custom \texttt{Notation} syntax, we can express the teleportation program as follows:
{\small\begin{verbatim}
Definition teleport Alice Bob q C : Qoreo Var.t :=
      (* Alice and Bob establish an entangled pair of qubits. *)
      do ( a , b ) ← get_entangled_pair Alice Bob ;;
      
      (* Alice performs some local operations and obtains classical bits x and z. *)
      do (q,a) ← Alice [-- Unitary CNOT (Pair q a) -] ;;
      do q     ← Alice [- Unitary H q -] ;;
      do x ← Alice [- Meas q -];;
      do z ← Alice [- Meas a -];;

      (* Alice sends x and z to Bob. *)
      do x ← send Alice x Bob ;;
      do z ← send Alice z Bob ;;

      (* Bob uses x and z to update his qubit. *)
      do b ← Bob [- If z (Unitary Z b) b -];;
      do b ← Bob [- If x (Unitary X b) b -];;
      C b
\end{verbatim}}

Notice that \texttt{teleport} takes a continuation $C$ as input, because otherwise it would terminate with a free variable, \texttt{b}, still in scope. This would not be well-typed according to our typing judgment, as the empty choreography is only typeable when there are no linear variables in the context. 
This is reflected in the typing rule derivable for \texttt{teleport}, which must be in the context of another choreography \texttt{C} that uses \texttt{b}. 

\begin{lemma}[\texttt{QoreoExamples.Teleportation.WellTyped\_teleport}]
    Suppose $qA$ is fresh and $\emptyset; Bob[qB:\qubit]; \emptyset \vdash C(qB)$.
    Then $\emptyset; Alice[qA:\qubit]; \emptyset \vdash \texttt{tleeport}~Alice~Bob~qA~C$.
\end{lemma}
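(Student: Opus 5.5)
The plan is a syntax-directed type derivation. I unfold the monadic \texttt{teleport} with \texttt{run} into the concrete instruction list shown above, with fresh variables for $a,b,q,x,z$ and the intermediate names. I then read that list top to bottom, applying one choreographic typing rule (\cref{fig:choreography-types} and its unshown analogues for let-$\otimes$ and let-$!$) per instruction, and thread the linear context $\ol{\Delta}$ through. Qubit-reference maps stay empty throughout, because no qubit reference occurs syntactically in the choreography. Every $\Theta_1,\Theta_2$ split is therefore $\emptyset,\emptyset$ and those disjointness side conditions are trivial. The remaining side conditions are freshness facts ($A.x\notin\ol{\Gamma},\ol{\Delta}$), which follow from the state monad always producing a variable larger than \texttt{max\_var}, together with the hypothesis that $qA$ is fresh.

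The derivation runs as follows, tracking the linear contexts.
\begin{enumerate}
\item Start from $Alice[q:\qubit]$.
\item \textsc{entangle} extends the context to $Alice[q,a:\qubit], Bob[b:\qubit]$.
\item \textsc{let}-$\otimes$ at Alice splits $\Delta_1=\{q,a\}$ for $\mathrm{CNOT}(q,a)$, typed by the local rules \textsc{unitary} and $\otimes$. This rebinds fresh $q',a'$.
\item \textsc{let} for $H(q')$ consumes $q'$ and binds $q''$.
\item Each of the two measurements is typed by \textsc{meas} and consumes one qubit.
\item Each \textsc{send} requires $e:!\tau$, so Alice's measurement results must be bound via the $!$-let form, or the monad's \texttt{Meas} binding must place $x,z$ in $\ol{\Gamma}(Alice)$ as duplicable bits. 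The two sends then add $Bob.x', Bob.z'$ to $\ol{\Gamma}$ with type $\bit$, leaving Alice's linear context empty.
\item Bob's two corrections use the local \textsc{if} rule: the condition takes $\Delta=\emptyset$ via \textsc{cvar}, and both branches share the linear context $\{b\}$ (\textsc{unitary} on one branch, \textsc{lvar} on the other). Each correction rebinds $b$ fresh.
\end{enumerate}
After these steps the context is $\ol{\Gamma}$ containing only Bob's classical bits, with $\ol{\Delta}=Bob[b'':\qubit]$.

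To close the derivation I apply the hypothesis $\emptyset;Bob[qB:\qubit];\emptyset\vdash C(qB)$ at $qB:=b''$. There is a mismatch: the hypothesis has an empty classical context, but at this point $\ol{\Gamma}$ contains Bob's received bits. I would discharge it with a weakening lemma for the classical actor map $\ol{\Gamma}$ on choreographies, proved by a routine induction on typing using \textsc{cvar}'s non-linearity. I would also need the usual renaming/equivariance of typing, since the continuation is stated for a generic variable.

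I expect the main obstacle to be the bookkeeping around \textsc{send}. The paper's rule requires the sent expression to have type $!\tau$, while $\MEAS{\cdot}$ returns $\bit$ rather than $!\bit$. The proof must therefore show that the monadic encoding actually produces a $!$-typed payload, for instance via a let-$!$ binding or by the DSL wrapping the payload as $!x$. The fallback is to check how \texttt{send} in the DSL elaborates and adjust the corresponding step. Beyond that, the real work in Rocq would be automating the freshness side conditions, which I would handle with a lemma that every variable produced by the monad exceeds all variables in the input contexts.
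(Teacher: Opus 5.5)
Your plan is the natural one, and presumably what the Rocq lemma does: unfold \texttt{teleport}, build the derivation one instruction at a time with every $\Theta$ empty, and take freshness from the monad. (The paper gives no written proof.) The gap is the step you flag yourself. You never actually obtain a payload of type $!\tau$ for the two \textsc{send}s, and both remedies you suggest fail under the paper's rules:
\begin{itemize}
\item A let-$!$ binding needs its right-hand side to have type $!\tau$, as in the local \textsc{let-$!$} rule, but \textsc{meas} only produces $\bit$.
\item $!x$ is typable only in empty linear and qubit contexts, so it cannot be formed from a linearly bound measurement result.
\end{itemize}
For the same reason, the explicit instruction list in the implementation section is not typable as written: it sends \texttt{Expr.Var 7}, which is bound by a plain \texttt{Let}. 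Reading the derivation off that list breaks exactly at the first \texttt{Send}, and operationally \textsc{sendB} would be stuck on a bare $0$/$1$.

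The missing idea is a coercion through the local \textsc{if} rule. The condition may consume a linear bit while the branches are closed boxed constants, so $\emptyset;x:\bit;\emptyset \vdash \IF{x}{!1}{!0} : \,!\bit$. This also reduces to a value of the form $!e$, as \textsc{sendB} requires. Your derivation goes through once the elaboration of the DSL's \texttt{send} (or of the measurement binding) is pinned down to insert such a coercion. Alternatively, if the Rocq \textsc{send} rule differs from the paper's figure, use that rule instead. Either way, this has to be settled rather than deferred.

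Two smaller points concern closing with the continuation. You are right that $\ol{\Gamma}$ is nonempty at that point, since it holds Bob's received bits, and that weakening is needed. But weakening is not unconditional. \textsc{lvar} requires $x\notin\Gamma$, and the choreographic \textsc{let}/\textsc{entangle} rules require bound names to lie outside $\ol{\Gamma}$. So the weakening lemma must assume the added names are neither rebound nor used linearly in $C(b'')$, and that freshness has to come from running $C$ in the post-teleport monad state.

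Renaming/equivariance is also not the tool you need. $C$ is an arbitrary function of its argument, so a typing of $C(qB)$ for one particular $qB$ tells you nothing about $C(b'')$. Read the hypothesis as universally quantified in $qB$, as the statement intends, and simply instantiate it at $b''$.
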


\subsection{Extraction to NetQASM}

\begin{figure}
\centering
\begin{tikzpicture}[
  component/.style={
    draw,
    rounded corners=1pt,
    align=center,
    text width=1.7cm,
    minimum height=2.5em,
    inner xsep=2pt,
    inner ysep=3pt,
    font=\scriptsize
  },
  flow/.style={->, shorten >=2pt, shorten <=2pt}
]
\node[component] (qoreo) {Qoreo\\choreography\\(Rocq + DSL)};
\node[component, right=0.35cm of qoreo] (epp) {Endpoint\\projection\\(Rocq)};
\node[component, right=0.35cm of epp] (renderer) {NetQASM\\renderer\\(Rocq)};
\node[component, right=0.35cm of renderer] (apps) {NetQASM apps\\and runtime\\(Python)};
\node[component, right=0.35cm of apps] (netqasm) {NetQASM simulator};
\draw[flow] (qoreo) -- (epp);
\draw[flow] (epp) -- (renderer);
\draw[flow] (renderer) -- (apps);
\draw[flow] (apps) -- (netqasm);
\end{tikzpicture}
\caption{Extraction pipeline from a global Qoreo choreography to simulation using NetQASM.}
\label{fig:qoreo-extraction}
\end{figure}

Qoreo comes with an extraction path from verified choreographies in Rocq to executable NetQASM applications~\citep{dahlberg_netqasm_2022}. The implementation follows the structure of the formal development: programmers write a single global choreography, the framework projects this choreography to one local process per participant, and the projected processes are rendered as NetQASM SDK programs that can be run by the standard NetQASM application infrastructure. This extraction path is illustrated in \cref{fig:qoreo-extraction}.


The final verified-to-executable step is provided by the NetQASM renderer, which renders Qoreo processes as Python source code for a small Qoreo runtime API. The runtime API in turn implements operations by calling the NetQASM API. The renderer produces one application file per actor, including the actor's local instructions and the peer information needed to construct its classical and entanglement connections. Expressions in the choreographic language are translated into their corresponding runtime operations.

The renderer uses Rocq's native extraction mechanism to first convert Qoreo processes to OCaml data structures. 
Executing the extracted OCaml code runs endpoint projection and generates the Python applications, one file per participant, together with the shared runtime module.

The generated Python programs use the Qoreo NetQASM runtime as a thin layer over the NetQASM SDK. This runtime constructs classical sockets and EPR sockets from the peer sets computed during rendering, manages the NetQASM connection, and exposes the operations used by the generated code. As a result, the extraction framework produces ordinary NetQASM SDK applications while preserving a direct line from the global choreographic source, through endpoint projection, to the local code executed by each actor.

\section{Examples}
\label{sec:examples}

In \Cref{sec:introduction,sec:extraction} we explored quantum teleportation in Qoreo, and in this section we will illustrate other examples of distributed quantum protocols in Qoreo.

\subsection{Distributed controlled gates}
\label{sec:example:controlled}

Consider the protocol in \Cref{fig:DCU-protocol}. It shows an implementation of a distributed $2$-qubit controlled unitary gate across two actors. The protocol is made up of three main stages. In stage 1, Alice and Bob establish an entangled pair of qubits $a$ and $b$. Alice entangles her pre-existing qubit with $a$ and sends a correction to Bob. In stage 2, Bob performs the controlled-unitary locally on his own qubits. In stage 3, Bob sends another correction back to Alice, which effectively uncomputes the effect of the entanglement on Alice's qubits.\footnotemark

\footnotetext{
Notice that this protocol is an example of why entanglement, and not teleportation, is a good primitive for quantum communication. We could have implemented a distributed gate by teleporting Alice's control qubit to Bob, performing the controlled gate, and then teleporting the control qubit back to Alice. However, that protocol would use two entangled states and four classical bits of communication, whereas this protocol uses only one entangled state and two classical communication bits.
}

\begin{figure}
    \begin{quantikz}
        \lstick{q~~~~~~}  
        &  \gategroup[2,steps=7,style={dashed,rounded corners,fill=blue!20, inner xsep=1pt}, background,label style={label position=left,anchor=west,xshift=-1cm,yshift=-0.75cm}]{Alice}
        &             & \ctrl{1}                 & & &   &  \gate{Z} &  \\
        &
        \setwiretype{n}  & \gate[4]{\text{entanglement}} \gateoutput{$a$}
            & \targ{} \setwiretype{q}  &  \gate{\text{Meas}} \wire[d][3]{c}
            \\
        \setwiretype{n}  &                           
            &                          &          &                                   & &
            &                          &  &                      \\
        \setwiretype{n}  & &
            &                          &          &                                   & 
                             \\
        & \gategroup[2,steps=7,style={dashed,rounded corners,fill=red!20, inner xsep=1pt}, background,label style={label position=left,anchor=west,xshift=-1cm,yshift=0cm}]{Bob}
        \setwiretype{n}  &  \gateoutput{$b$}
            & \setwiretype{q}          &          
            \gate{X} \slice{} & \gate[2]{CU} \slice{} & \gate{H} & \gate{\text{Meas}} \wire[u][4]{c} \\
        \lstick{tgt~~~~~~} & & & & & & & & 
    \end{quantikz}
    \caption{Distributed protocol for implementing a 2-qubit controlled-unitary~\cite{sarvaghad2021general}, broken up into three stages.}
    \label{fig:DCU-protocol}
\end{figure}

The protocol above is adapted from \citet{sarvaghad2021general}, which presents a generalization to an $n$-qubit controlled gate distributed across $n+1$ actors. This generalized protocol is harder to express as a circuit diagram, which is why we only illustrated the 2-qubit version. However, their $n$-qubit protocol can be easily expressed programatically as a function in Qoreo, shown in \cref{fig:distributed-U-qoreo}.

\begin{figure}
    \centering
\footnotesize
\begin{verbatim}
Definition distributed_n_ctrl_U (CU : list Var.t -> Var.t -> Qoreo (list Var.t * Var.t)) 
                                (As : list Actor.t) (qs : list Var.t)
                                (B : Actor.t) (tgt : Var.t)
                              : Qoreo (list Var.t * Var.t) :=

    let stage1 := fun (A : Actor.t) (q : Var.t) => 
        do (a,b) ← get_entangled_pair A B ;;
        do (q,a) ← A [-- Unitary CNOT (Pair q a) -] ;;
        do x     ← send A (Meas a) B ;;
        do b     ← B [- If x (Unitary X b) b -];;
        ret (q, b) : Qoreo (Var.t * Var.t) in

    let stage3 := fun A q b =>
        do z ← send B (Meas (Unitary H b)) A ;;
        do q ← B [- If z (Unitary Z q) q -];;
        ret q : Qoreo Var.t in

    (* Stage 1: As performs local operations  *)
    do qs_bs_lis ← fmap As qs stage1 ;;
    (* split up qs_bs_list : list (Var.t * Var.t) into a pair of lists *)
    let (qs, bs) := List.split qs_bs_list in

    (* Stage 2: B applies ctrl-U locally using bs as controls *)
    do (qs, tgt) ← CU bs tgt ;;

    (* Stage 3: B sends corrections back to As *)
    do qs ← fmap2 As qs bs stage3 ;;
    
    ret (qs, tgt).
\end{verbatim}
    \caption{An implementation of an $n$-controlled gate distributed across $n$ actors. The function \texttt{distributed\_n\_CU} takes as input a list of actors $A$ with qubit variables $qs$, and a single target actor $B$ with qubit $tgt$. The goal of the protocol is to apply the input controlled unitary $CU$ to the controls $qs$ and target $tgt$. The helper functions \texttt{fmap} and \texttt{fmap2} fold over the input lists and concatenate the results together, e.g. \texttt{stage1} returns a pair of variables and \texttt{fmap As qs stage1} returns a list of pairs of variables, which we split up into a pair of lists using \texttt{List.split}. Note that although \texttt{fmap}/\texttt{fmap2} iterate sequentially through the actors in $As$, the \texttt{Delay} rule of choreographies ensures that execution of each stage can be interleaved depending on the nondetermistic semantics of each actor.}
    \label{fig:distributed-U-qoreo}
\end{figure}

\subsection{Quantum Key Distribution / B92}

In quantum key distribution (QKD), two parties (Alice and Bob) establish a secret key using a quantum channel and a classical channel. The quantum channel is \textit{insecure}: an eavesdropper (Eve) may measure, resend, or otherwise manipulate the entangled pairs it generates. The classical channel is \textit{authenticated} but not \textit{confidential}: Eve may read the messages that pass between Bob and Alice but may not modify them.

The B92 protocol runs in a predetermined number of rounds. In each round, Alice and Bob try to establish a single bit of the secret key. Alice picks a random classical bit and encodes it as a qubit using one of two non-orthogonal bases. Bob randomly chooses one of the bases in which to measure the qubit. If the result is 1, the measurement is said to be \textit{conclusive}. So, the choice of bases is what determines the shared bit; and the measurement serves as an indicator that their choices agree. The probability that Alice and Bob can use the bit is only 1/4, and the number of rounds in the protocol should account for this.

A Rocq definition for a single round of the protocol is shown in \cref{fig:b92-qoreo}. The definition uses quantum randomness for the step involving the creation of a random classical bit, instead of a pseudorandom number generator, simply because the machinery for that technique is readily available here. Note also the use of the teleportation abstraction to meet the way in which the B92 setup is usually described: in terms of \textit{sending} a qubit across the channel.

\begin{figure}
    \centering
    \footnotesize
    \begin{verbatim}
Definition b92_round (Alice Bob : Actor.t) : Qoreo ((Var.t * Var.t) * (Var.t * Var.t)) :=

  (* Alice randomly picks her key bit by preparing |+⟩ and measuring. *)
  do coin_a ← Alice [- Unitary H (New (Bit false)) -] ;;
  do a ← Alice [- Meas coin_a -] ;;

  (* Alice prepares her transmission qubit: |0⟩ if a=0, |+⟩ if a=1. *)
  do q ← Alice [- New (Bit false) -] ;;
  do q ← Alice [- If a (Unitary H q) q -] ;;

  (* Alice sends the qubit to Bob over the quantum channel. *)
  do q_b ← teleport Alice Bob q ;;

  (* Bob randomly picks his measurement basis. *)
  do coin_b ← Bob [- Unitary H (New (Bit false)) -] ;;
  do b ← Bob [- Meas coin_b -] ;;

  (* Bob rotates into his chosen basis by applying H when b=1. *)
  do q_b ← Bob [- If b (Unitary H q_b) q_b -] ;;

  (* Bob measures in his chosen basis.  A result of 1 means the round is
  conclusive: both Alice and Bob know they share the same key bit. *)
  do r ← Bob [- Meas q_b -] ;;

  (* Bob announces the round result to Alice over the classical channel. *)
  do r_recv ← send Bob r Alice ;;

  ret ((a, r_recv), (b, r)).
    \end{verbatim}
    \caption{B92 Algorithm}
    \label{fig:b92-qoreo}
\end{figure}

The complete example in the repository uses \texttt{b92\_round} to model a protocol with three rounds. In its compiled form, the protocol writes Alice's and Bob's keys to their NetQASM logs. For empirical testing, we used a script that runs the NetQASM simulation multiple times, each time analyzing the logs to ensure that the keys match. In a real world scenario, Alice and Bob would also check to see if the number of usable bits is significantly different from 1/4 the number of rounds, as this would implicate Eve in trying to learn the key.

\section{Related Work}
\label{sec:related}

\subsection{Classical choreographies and process languages}

Choreographies and endpoint projection formalize an intuitive idea---to write distributed or networked processes as a global program~\cite{montesi2023introduction}. The introduction of endpoint projection in particular enabled the idea to become a useful programming paradigm, and researchers have extended choreographies with a variety of techniques including procedures~\cite{ProceduralChoreographicProgramming}; higher-order functions~\cite{HasChorFunctionalChoreographic}; knowledge of choice~\cite{montesi2023introduction}; and object-oriented programming~\cite{giallorenzoChoralObjectorientedChoreographic2024}, to name a few.


Choreographies specify what every actor does at every point in a protocol---both a strength and a weakness of the system. As a programmatic style it is excellent: a single choreographic term captures the entire protocol in a readable form. But as a specification it leaves no room for actor autonomy, since every participant's behavior is fully determined. This makes it difficult to reason about security properties or multi-party computation, where different actors act independently without complete knowledge of each other's behavior.

\paragraph{Multi-party session types}
This is addressed in part by multi-party session types~\cite{honda2008multiparty}, which grew out of the domain of binary session types~\cite{HondaVasconcelosKubo1998} and process calculi~\cite{caires2010session}. In multiparty session types, a global type is used to specify the intended communications of the systems, without specifying the values being transmitted. The global type is projected onto local binary session types implemented by independent processes. In some cases, multiparty session types can be thought of as a type system for choreographies~\cite{carbone2018multiparty}.

Both choreographies and multi-party session types present a global view of a protocol, which typically imposes some additional limitations. For example, projection typically requires knowing all the actors in the system ahead of time, which can make it difficult to design choreographic systems in a compositional style, where protocols can be written (and even projected) independently and later combined. Some work in the classical setting has addressed this challenge---for example, \citet{hirsch2022pirouette} support compositionality through higher-order choreographic functions---but these approaches add a certain amount of complexity. 

A related limitation is that choreographies are not always suited to expressing protocols where new actors can be dynamically spawned during execution. For near-term quantum networks, where hardware resources are extremely constrained, this may not be a practical concern---the set of participating nodes is typically fixed at deployment time. In the longer term, however, scalable quantum networks will likely need to support dynamic participation, and extending choreographies to handle this case is a non-trivial open problem.

\paragraph{Binary session types.}
In contrast, binary session types~\cite{HondaVasconcelosKubo1998} describe the behavior of each endpoint independently, and are thus much more composable. As long as each actor implements the prescribed interface, global properties like deadlock-freedom are still guaranteed. A drawback to session types, however, is that they are very verbose---users need to specificy the type of each channel, (the type and direction of data sent and received over each channel) and also implement each process over that channel, which does the sending and receiving. Furthermore, each channel has two types associated with each of the two endpoints, leading to a significant amount of repetition in the description of the protocol. Choreographies on the other hand, start with the global view and infer the communication information from that single perspective.

\subsection{Quantum network simulators and languages.}

Because near-term quantum network hardware is scarce, simulation is a central tool for protocol design and software development. 
Simulators model local gates and measurement as well as 
probabilistic entanglement generation, routing, and scheduling~\citep{kozlowski_designing_2020, coopmans_netsquid_2021}. Discrete-event simulators like NetSquid~\cite{coopmans_netsquid_2021} and SeQUeNCe~\cite{wu_sequence_2020} are used to model and understand the behavior of large quantum networks. Other simulator libraries like SimulaQron~\cite{dahlberg_simulaqron_2017} and QuNetSim~\cite{diadamo_qunetsim_2020} emphasize software development and protocol prototyping at a higher level. These tools make different tradeoffs: some prioritize detailed physical modeling, some scalability, and some a more convenient programming interface for application-level protocols.



NetQASM~\citep{dahlberg_netqasm_2022} is complementary to these simulators: it is a low-level instruction set architecture and SDK for hybrid classical/quantum programs that run on quantum network nodes. It is not itself a simulator, but an intermediate representation that is designed both to be interpreted by simulators and to be run on real quantum network architectures
In this work, we use NetQASM as a platform-agnostic intermediate representation that supports experimenting with projected Qoreo protocols.

\section{Conclusion and future work}
\label{sec:conclusion}
Qoreo opens up several avenues for exploring distributed quantum programming in the future.

\subsection{Expanded protocols and knowledge of choice}

In choreographic programming languages, knowledge of choice refers to the ability for different communication patterns to occur depending on dynamic decisions made by individual actors~\cite{montesi2023introduction}. To do this safely, the relevant actors must coordinate amongst themselves. Knowledge of choice is well-studied in classical choreographies, and its integration into Qoreo should be largely independent of the quantum-specific design choices made here. Adding knowledge of choice would enable Qoreo to represent an even richer class of choreographic protocols.

\subsection{Optimizations on choreographic distributed quantum programs}

Choreographies may also serve as an effective intermediate representation for optimization. Because a choreography presents a global view of the protocol, optimization passes can reason across abstraction layers: low-level quantum communication costs, actor-local computation structure, and global synchronization points. In particular, this perspective can expose opportunities to rearrange where different portions of a computation are performed, while still preserving protocol-level correctness constraints.

Expanding the communication model would further enrich this optimization space. Rather than restricting the language to binary entanglement generation alone, we could admit higher-level primitives such as teleportation and distributed unitary gates, such as the TeleData and TeleGate models of \citet{caleffi_distributed_2024}. These additional primitives could support optimizations that are difficult to express at a single abstraction level, including Bell-state usage minimization and circuit-cutting-style decompositions that trade communication for local computation.

\subsection{Quantum networks, noise, and more}

There are many different models of quantum distributed systems beyond the one presented in this paper. For example, we might consider heterogeneous networks where different nodes have different capabilities, networks with limited connectivity between nodes, or local systems that distinguish communication qubits from computation qubits. Extending Qoreo to handle these more realistic network models could require a heterogeneous collection of local quantum languages, each tailored to the capabilities of individual node types.

One important consideration outside the scope of this work involves noise. Current quantum computers are extremely noisy: every quantum operation carries a chance of error, and even an idle system decoheres over time. Reasoning about accumulated error rates and decoherence windows may be more tractable at the choreographic level than at the process level, since the global view makes communication depth and operation placement explicit. This perspective could support both static analysis of noise budgets and automated transformations that reduce communication depth or consolidate operations to minimize exposure to decoherence.

\subsection{Conclusion}

Qoreo enables distributed quantum programming through a global, choreographic view that can be soundly projected onto asynchronous processes. This paper formally verifies the correctness of Qoreo's meta-theory and demonstrates through examples that choreographies naturally and concisely capture distributed quantum protocols. We hope this work serves as a foundation for future tools and languages that bring the benefits of choreographic programming to distributed quantum computing.

\bibliographystyle{ACM-Reference-Format}
\bibliography{references}


\end{document}